% ****** Start of file apssamp.tex ******
%
%   This file is part of the APS files in the REVTeX 4.1 distribution.
%   Version 4.1r of REVTeX, August 2010
%
%   Copyright (c) 2009, 2010 The American Physical Society.
%
%   See the REVTeX 4 README file for restrictions and more information.
%
% TeX'ing this file requires that you have AMS-LaTeX 2.0 installed
% as well as the rest of the prerequisites for REVTeX 4.1
%
% See the REVTeX 4 README file
% It also requires running BibTeX. The commands are as follows:
%
%  1)  latex apssamp.tex
%  2)  bibtex apspsamp
%  3)  latex apssamp.tex
%  4)  latex apssamp.tex
%
\documentclass[%
%amsart,
 reprint, %% Default
%superscriptaddress,
%groupedaddress,
%unsortedaddress,
%runinaddress,
%frontmatterverbose, 
% preprint,
%showpacs,preprintnumbers,
%nofootinbib,
%nobibnotes,
%bibnotes,
 amsmath,amssymb,
 aps,
%pra,
%prb,
%rmp,
%prstab,
%prstper,
%floatfix,
]{revtex4-1} %% Default

\usepackage{dcolumn}% Align table columns on decimal point
\usepackage{bm}% bold math
%\usepackage{hyperref}% add hypertext capabilities
%\usepackage[mathlines]{lineno}% Enable numbering of text and display math
%\linenumbers\relax % Commence numbering lines

%% Added

\usepackage[dvipdfmx]{graphicx}
\usepackage{amssymb,amsfonts,amsmath,amsthm}
\usepackage{mathrsfs}
\usepackage{array}

\usepackage{algorithmic}
\usepackage[boxed,commentsnumbered]{algorithm2e}

\newtheorem{thm}{Theorem}[]
\newtheorem{prop}{Proposition}
\newtheorem{definition}[]{Definition}

\newtheorem{question}[]{Question}

%% Added

%\usepackage[showframe,%Uncomment any one of the following lines to test 
%%scale=0.7, marginratio={1:1, 2:3}, ignoreall,% default settings
%%text={7in,10in},centering,
%%margin=1.5in,
%%total={6.5in,8.75in}, top=1.2in, left=0.9in, includefoot,
%%height=10in,a5paper,hmargin={3cm,0.8in},
%]{geometry}

%% Semi-general definitions.

%% Shorthand commands for practical uses.

%% Macro
\def\diam{\mathrm{diam}}
\newcommand{\floor}[1]{\lfloor #1 \rfloor}
\newcommand{\ceil}[1]{\lceil #1 \rceil}
\newcommand{\D}[0]{\ \mathrm{d}}
\newcommand{\expectation}[2]{ \mathbb{E}_{#1}\left[ #2 \right] }
\newcommand{\expf}[2]{ \overline{ #2 }^{( #1 )} }

\begin{document}

\preprint{APS/123-QED}

%\title{Estimating Pointwise Dimension Distribution} % Force line breaks with \\
%\title{A Pointwise Dimension Estimator} % Force line breaks with \\
%\title{An Estimator of Pointwise Dimension} % Force line breaks with 
\title{On the Estimation of Pointwise Dimension} % Force line breaks with \\
%\title{A New Estimator of Fractal Dimension} % Force line breaks with \\
%\thanks{A footnote to the article title}%
\thanks{This study was supported by the NeuroCreative Lab, Grant-in-Aid for Scientific Research B No. 23300099, and Grant-in-Aid for Exploratory Research No. 25560297.}%

%\author{Shohei Hidaka}
% \altaffiliation[Also at ]{Physics Department, XYZ University.}%Lines break automatically or can be forced with \\
%\author{Second Author}%
% \email{Second.Author@institution.edu}
%\affiliation{%
% Authors' institution and/or address\\
% This line break forced with \textbackslash\textbackslash
%}%

%\collaboration{MUSO Collaboration}%\noaffiliation

\author{Shohei Hidaka}
 \homepage{http://www.jaist.ac.jp/~shhidaka}
\affiliation{
 School of Knowledge Science\\
Japan Advance Institute of Science and Technology
}%
\author{Neeraj Kashyap}
 \homepage{http://www.jaist.ac.jp/~shhidaka}
\affiliation{
 School of Knowledge Science\\
Japan Advance Institute of Science and Technology
}%
%\affiliation{
% Third institution, the second for Charlie Author
%}%
%\author{Delta Author}
%\affiliation{%
% Authors' institution and/or address\\
% This line break forced with \textbackslash\textbackslash
%}%

%\collaboration{CLEO Collaboration}%\noaffiliation

\date{\today}% It is always \today, today,
             %  but any date may be explicitly specified

\begin{abstract}
Our goal in this paper is to develop an effective estimator of fractal dimension.
We survey existing ideas in dimension estimation, with a focus on the currently popular method of Grassberger and Procaccia for the estimation of correlation dimension.
There are two major difficulties in estimation based on this method.
The first is the insensitivity of correlation dimension itself to differences in dimensionality over data, which we term {\em dimension blindness}. 
The second comes from the reliance of the method on the inference of limiting behavior from finite data.

We propose pointwise dimension as an object for estimation in response to the dimension blindness of correlation dimension.
Pointwise dimension is a local quantity, and the distribution of pointwise dimensions over the data contains the information to which correlation dimension is blind.
We use a ``limit-free'' description of pointwise dimension to develop a new estimator.
We conclude by discussing potential applications of our estimator as well as some challenges it raises.

%The current study seeks for an effective numerical method for fractal dimension.
%We review on existing numerical techniques, and point out essential shortcomings of them. They stem from their (1) blindness to dimension distribution,(2) dependence on small scale structure of data, and (3) arbitraryness in choice of estimation parameters.
%This motivates our new estimator.
%In contrast to the existing techniques, our proposed estimator is designed so that it estimates distribution of pointwise dimension only with finite scale structure of the data adaptively without any essential parameters to be chosen. 
%Finally, we discuss some depth potential applications of our estimator as well as some challenges it raises.
 \end{abstract}
%% Some meta informtion is removed from the default tex.

\maketitle

\tableofcontents

\section{Introduction}
Dimension has become an important tool in the study of dynamical systems.
Its importance arises from the attempt to understand dynamical sytems by their attractors.
The dimensions of these attractors provide useful invariants of the systems in question.
There are many notions of dimension that one can consider in this context. 
The most well-known of these is Hausdorff dimension. Hausdorff dimension, however, is difficult to estimate numerically and this has given rise to many of the other conceptions.
\\

Often, one can associate with ergodic systems certain probability measures supported on attractors of interest.
There is a dimension theory for probability measures, roughly analogous to the classical dimension theory for sets, which one can bring to bear in the context of such systems.
This is the setting for this paper. In particular, we focus on the problem of estimating the {\it pointwise dimension distribution} of a  probability measure by sampling from it.
\\

In Section \ref{sec-MathematicalBackground}, we discuss various mathematical notions of dimension and the relationships between them.

In Section \ref{sec-Estimation}, we discuss the current most popular technique for estimating fractal dimension -- that of Grassberger and Procaccia \cite{Grassberger1983}.

This discussion motivates our derivation of our estimator of pointwise dimension, which we present in Section \ref{sec-Estimator}.

In Section \ref{sec-BasicAnalysis}, we establish a certain degree of reliability of the estimator proposed in Section \ref{sec-Estimator} by performing a dimensional analysis of some test data.

Finally, in Section \ref{sec-Future}, we discuss potential applications of our estimator as well as certain theoretical questions arising from our considerations.

%% !!!!!!!! We need to work a little more later !!!!!!!!!!!!!!!!!!!!!!!!!!!!!!!!!!!!!!!

\section{Mathematical Background \label{sec-MathematicalBackground}}

\subsection{Pointwise Dimension \label{sec-PointwiseDimension}}
%% Intuition
% Pointwise dimension gives a local scaling behavior of the measure at a point.

Very generally, the purpose of any notion of dimension is to tell apart certain types of objects given a certain notion of equivalence of such objects.
Pointwise dimension tells apart Borel probability measure on metric spaces which are not equivalent by a locally bi-Lipschitz injection.
\\

To be precise let $\mu_{1}$ and $\mu_{2}$ be Borel probability measures on the metric spaces $( X_{1}, \rho_{1} )$ and $( X_{2}, \rho_{2} )$ respectively. For a point $x$ in a metric space $(X, \rho)$ and a number $\epsilon > 0$, let us write
\[
 B( x, \epsilon ) := \{ y \in X \mid \rho( y, x ) < \epsilon \}. 
\]
A map $f: X_{1} \rightarrow X_{2}$ is called {\it locally bi-Lipschitz} if for all $x \in X_{1}$ there exists $r > 0$ and a constant $K \ge 1$ such that for all $u, v \in B( x, r ) $ we have
\[
 \frac{1}{K} \rho_{1}( u, v ) \le \rho_{2}( f(u), f(v) ) \le K \rho_{1}( u, v ).
\]
Let us say that the measures $\mu_{1}$ and $\mu_{2}$ are {\it equivalent} if there exists a locally bi-Lipschitz injection $f: X_{1} \rightarrow X_{2}$ such that for all $\mu_{2}$-measurable $U \subset X_{2}$ we have 
\begin{equation}
 \mu_{1}( f^{-1}( U ) ) = \mu_{2}( U ).
  \label{eq-PointwiseEquivalence} 
\end{equation}
Pointwise dimension is a source of invariants of such probability measures under this notion of equivalence.
Rather than prove this general statement, we will illustrate this in a very simple situation which motivates the concept of pointwise dimension.
The illustration will contain all the ideas require for the general proof.
\\

%Before we begin our formal discussion of pointwise dimension, let us consider a simple problem which illustrates the motivation behind the concept. 
Consider the unit interval $I$ along with its uniform measure $\nu_{1}$ and the unit square $I \times I$ along with its uniform measure $\nu_{2} = \nu_{1} \times \nu_{1}$, both under their natural Euclidean metrics.
The ``pointwise dimension'' method of telling these measures apart is to observe that, for $x \in I$, we have 
\[
 \nu_{1}( B( x, \epsilon ) ) \sim \epsilon, 
\]
whereas, for $y \in I \times I$, we have
\begin{equation}
 \nu_{2}( B( y, \epsilon ) ) \sim \epsilon^{2}.
\label{eq-SquareRate}
\end{equation} 
If there were an equivalence $f$ between $\nu_{1}$ and $\nu_{2}$, then we would have 
%then these growth rates would have be the same.
%The reason for this is that, 
\[
 B\left( x, \frac{1}{K} \epsilon \right) \subseteq f^{-1}\left( B( f( x ), \epsilon ) \right) \subseteq B\left( x, K \epsilon \right).
\]
This would force
\[
 \nu_{1}( f^{-1}\left( B( f( x ), \epsilon ) \right) ) \sim \epsilon, 
\]
which would contradict (\ref{eq-SquareRate}).
\\

For a general Borel probability measure, growth rates such as the ones we computed above for $\nu_{1}$ and $\nu_{2}$ may not exist at all points or even at any point in the corresponding metric space.
This motivates the following definition: 
\begin{definition}
\label{def-PointwiseDimension}
Let $\mu$ be a Borel probability measure on a metric space $( X, \rho)$.
The lower pointwise dimension of $\mu$ at $x$ is defined to be 
\begin{equation}
 \underline{D}_{\mu}(x) := \liminf_{\epsilon \rightarrow 0} \frac{ \log \mu( B( x, \epsilon ) ) }{ \log \epsilon }.
\label{eq-LowerPointwiseDimension}
\end{equation}
The upper pointwise dimension of $\mu$ at $x$ is defined to be 
\begin{equation}
 \overline{D}_{\mu}(x) := \limsup_{\epsilon \rightarrow 0} \frac{ \log \mu( B( x, \epsilon ) ) }{ \log \epsilon }.
\label{eq-UpperPointwiseDimension}
\end{equation}
If these two quantities agree, we call the common value the pointwise dimension of $\mu$ at $x$ and denote it $D_{\mu}( x )$.
\end{definition}

The calculations we performed highlight two properties of pointwise dimension: 
\begin{enumerate}
 \item Pointwise dimension is invariant under bi-Lipschitz injections.
 \item If $\mu$ is absolutely continuous with respect to Lebesgue measure on $\mathbb{R}^{N}$, then the pointwise dimension of $\mu$ at any point in its support is $N$.
\end{enumerate}

Let us now see how pointwise dimension behaves for other classes of measures.
We begin by considering the Cantor measure -- this is the measure resulting from the Cantor middle-thirds construction if at each scale we choose one of the remaining intervals uniformly.
To be precise, for an interval $I$, let $u_{I}$ denote the uniform measure on $I$. At the $n^{\text{th}}$ stage of the middle third construction, there are $2^{n}$ admissible interval $I_{1}^{(n)}, I_{2}^{(n)}, \hdots, I_{2^{n}}^{(n)}$. Let us write 
\[
 \xi_{n} := \frac{1}{2^{n}} \sum_{j = 1}^{2^{n}} u_{ I_{j}^{(n)} }. 
\]
There is a limiting measure $\xi$ for the family $\{ \xi_{n} \}$ which we call the Cantor measure.
$\xi$ is supported on the Cantor set $\mathcal{C}$.
\\

For $x \in \mathcal{C}$ and $\epsilon > 0$, let us first calculate $\xi( B( x, \epsilon ) )$.
Note that the scale of this ball in terms of the middle-thirds construction is given by $\log_{1/3} ( \epsilon ) $. 
This means that 
\[
 2^{ - \ceil{ \log_{1/3} ( \epsilon ) } } \le \xi( B( x, \epsilon ) ) \le 2^{ - \floor{ \log_{1/3} ( \epsilon ) }}.
\]
This shows that
\[
  \frac{ \log( \xi( B( x, \epsilon ) ) ) }{ \log( \epsilon ) } \sim \frac{ \log_{3}( \epsilon ) \cdot \log( 2 ) }{ \log( \epsilon ) },
\]
which in turn proves that $D_{\xi}( x )$ exists and is equal to $\frac{ \log( 2 ) }{ \log( 3 ) }$.
\\

Of course, the Cantor measure is a special case of a much more general class of measures which we call {\it Cantor-like measures}. This class of measures was first studied by Billingsley in the context of dimension theory \cite{Billingsley1960,Billingsley1961}. 
We restrict our attention here to the case of Borel probability measure on $\mathbb{R}$ and essentially reproduce Billingsley's definition here. It is possible to give a more modern definition in terms of iterated function systems.\\ %% CITATIONS neeeeeded for this.

Let $k > 0$ be an integer and let $\pi = (p_{1}, p_{2}, \hdots, p_{k})$ be a probability vector.
Divide the unit interval $I$ into $k$ subintervals of equal length $I_{1}, I_{2}, \hdots, I_{k}$.
Using our notation for the uniform measure from before, we write 
\[
 \xi_{\pi, 1} := \sum_{j = 1}^{k} p_{j} u_{ I_{j} }.
\]
We can further subdivide each of the intervals $I_{j}$ into $k$ subintervals of equal length $I_{j,1}, I_{j, 2}, \hdots, I_{j, k}$.
This allows us define 
\[
 \xi_{ \pi, 2 } := \sum_{ j_{1} = 1 }^{ k } p_{ j_{1} } \sum_{ j_{2} = 1 }^{ k } p_{ j_{2} } u_{ I_{ j_{1}, j_{2} } }.
\]
Iterating this procedure $n$ times yields a probability measure $\xi_{\pi, n }$.
The family $\{ \xi_{ \pi, n } \}$ has a limiting measure which we denote $ \xi_{\pi} $ and which generalizes our earlier construction of the Cantor measure. We will now calculate the pointwise dimension of such a measure $\xi_{ \pi }$ at a point $x \in I$.
\\

A point $x \in I$ is uniquely specified by a sequence of integers $\{ j_{i} \}$ as 
\[
 \{ x \} = \bigcap_{ i = 1 }^{ \infty } I_{ j_{1}, j_{2}, \hdots, j_{i} }.
\]
Let $\epsilon > 0$. The scale of the ball $B( x, \epsilon )$ in terms of the construction of $\xi_{\pi}$ is given by $\log_{1/k}( \epsilon )$. In fact, if we write $s( \epsilon) := \floor{ \log_{1/k}( \epsilon ) }$ and $S( \epsilon ) := \ceil{ \log_{1/k}( \epsilon ) }$
\begin{equation}
  \prod_{ i = 1 }^{ S( \epsilon ) } p_{ j_{i} } \le \xi_{\pi}( B( x, \epsilon ) ) \le 
\prod_{ i = 1 }^{ s( \epsilon ) } p_{ j_{i} }.
\label{eq-XiPiSqueeze}
\end{equation}
Let us count, for $1 \le j \le k$, 
\[
 f_{n}( j ) := | \{ i \mid 1 \le i \le n, j_{i} = j \} |.
\]
From (\ref{eq-XiPiSqueeze}), we see that 
\begin{equation}
\frac{ \log( \xi_{\pi}( B( x, \epsilon ) )  ) }{ \log( \epsilon ) }
\sim
 \sum_{ j = 1 }^{ k } f_{ S( \epsilon ) }( j ) \frac{ \log( p_{j} ) }{ \log( \epsilon ) }.
\label{eq-CLSFrequency} 
\end{equation}
Let us denote by $A_{\pi}$ the set of $x \in I$ for which, as $\epsilon \rightarrow 0$,
\[
  \frac{ f_{ S( \epsilon ) }( j ) }{ \log( \epsilon ) } \rightarrow  \frac{p_{j}}{ \log( 1/ k ) }.
\]
Then $\xi_{\pi}( A_{\pi} ) = 1$ by the strong law of large numbers.
Therefore, almost everywhere with respect to the measure $\xi_{\pi}$, we have
\begin{equation}
 D_{\xi_{\pi}}( x ) = - \frac{1}{ \log( k ) }\sum_{j = 1}^{k} p_{j} \log\left( p_j \right). 
\end{equation}
%However, there are many points at which the pointwise dimension does not take this value 
%% Neeraj needs to figure out this a little more.

%% Note that Hausdorff dimension of the support when non-zero probability are equal.

%% Break on Nov 18th.

Furthermore, from (\ref{eq-CLSFrequency}), we can see that there are infinitely many points in the support of $\xi_{\pi}$ at which its pointwise dimension is not defined.
We illustrate this idea with the following example: 

Put $n_{1}:=1$. For each integer $m > 1$, define 
\[
 n_{m} := \min\left\{ n \mid \frac{n-n_{m-1}}{n} > 1 - \frac{1}{2^{m}} \right\}.
\]
Let $\mathcal{I}$ denote the vector space of measures generated by $u_{ [a,b] }$ for $ a < b$ in $\mathbb{R}$.
Let us define $S: \mathcal{I} \rightarrow \mathcal{I}$ by 
\[
 S u_{[a,b]} := \frac{1}{2}\left( u_{ [a, \frac{b-a}{5}] } + u_{ [b - \frac{b-a}{5}, b] } \right).
\]
Let us define $T : \mathcal{I} \rightarrow \mathcal{I}$ by 
\[
 T u_{[a,b]} := \frac{1}{3}\left( u_{ [a, \frac{b-a}{5}] } + u_{ [a + 2 \frac{b-a}{5}, a + 3 \frac{b-a}{5} ] } + u_{ [b - \frac{b-a}{5}, b] } \right).
\]
Finally, put
\[
 \mu_{2m+1} := S^{n_{2m+1}} T^{n_{2m}} \cdots T^{n_{2}}S^{n_{1}} \cdot u_{ [0,1] },
\]
and
\[
 \mu_{2m} := T^{n_{2m}} S^{n_{2m-1}} \cdots T^{n_{2}}S^{n_{1}} \cdot u_{ [0,1] }.
\]
Then each measure $\mu_{k}$ is a Borel probability measure and the family $\{ \mu_{k} \}$ converges to a Borel probability measure $\mu$.
It is easy to see that, for every point $x$ in the support of $\mu$, we have 
\[
 \underline{D}_{\mu}( x ) = \frac{1}{2}\frac{ \log( 2 ) }{ \log( 5 ) }, 
\]
whereas
\[
 \overline{D}_{\mu}( x ) = \frac{1}{3}\frac{ \log( 3 ) }{ \log( 5 ) }, 
\]
Therefore, at no point in the support of $\mu$ does it have a well-defined pointwise dimension.
More serious examples of such pathological behavior have been given by Ledrappier and Misiurewicz \cite{Ledrappier1985}, and Cutler \cite{Cutler1990}.  
%% maybe Ledrappier 1985 and Cutler 1990 according to Pesin (1993).
\\

Although there is much left to be said on the matter of pointwise dimension, we end this section here hoping that we have given the readers enough of sense of this quantity that he is comfortable working with it.
%% FIX THIS !!!!!!!!!!!!!!!!!!!!!!! NNEEERARAJJ

%
%Cutler \cite{Cutler1993} uses these measures to propose a classification of probability measures into ``smooth'', ``semi-fractal'', and ``fractal''.

%% Examples
%Pointwise dimension of the Cantor set.
%A more complicated example of Cantor-like measure.
%% Bilingsley's / Cutler's classification
%% 

\subsection{Other notions of dimension \label{sec-OtherDimensions}}
\subsubsection{Hausdorff dimension \label{sec-HausdorffDimension}}
One of the most well-known notions of dimension is that of Hausdorff \cite{Hausdorff1918}.
%Hausdorff \cite{Hausdorff1918} has formulated a generalization of Lebesgue measure on an integer dimensional object which is known as Hausdorff measure and dimension. Hausdorff measure is natural to quantify ``irregular'' objects such as a fractal which is often found as an attractor of a dynamical system. 
The notion of Hausdorff dimension is built upon the notion of a Hausdorff measure.

Let $E \subseteq \mathbb{R}^{N}$, and let $\epsilon > 0$.
An $\epsilon$-{\it covering} of $E$ is a countable collection $\{E_{k}\}_{k}$ of sets such that $E \subseteq \cup_{k}E_{k}$ and $\sup_{k} \diam (E_{k}) \le \epsilon$. 
For $\epsilon > 0$, put 
\begin{equation}
 \label{eq-HausdorffMeasureFixedScale}
 H^{ \alpha }( E, \epsilon ) := \inf_{ \{ E_{k} \} }  \left\{
 \sum_{k} \diam ( E_{k} )^{ \alpha } \Big| \left\{ E_{k} \right\}_{k} 
\right\},
\end{equation}
where the infimum is over $\epsilon$-coverings $\{ E_{k} \}$ of $E$.
Then Hausdorff $\alpha$-measure ($\alpha > 0$) is defined by %\cite{Cutler1993}:
\begin{equation}
 H^{ \alpha }( E ) := \lim_{\epsilon \rightarrow 0} H^{\alpha}( E, \epsilon ).
\label{eq-HausdorffMeasure}
\end{equation}
Note that this limit either exists or is $\infty$ because $H^{\alpha}( E, \epsilon )$ is a non-increasing function of $\epsilon$.
\\
%where the infimum is taken over all $\epsilon$-coverings of $E$.
% [Definition of Hausdorff measure: Definition 2.3.1 in Cutler (1992)]

Observe that, for $\alpha < \beta$, we have the inequality 
\[
 H^{\beta}( E, \epsilon ) \le \epsilon^{ \beta - \alpha } H^{\alpha}( E, \epsilon ).
\]
Suppose that $ H^{\alpha}( E ) \in ( 0, \infty )$. Then allowing $\epsilon \rightarrow 0$ in the above inequality forces $H^{\beta}( E ) = 0$.
Therefore there exists a unique critical exponent $\alpha$ 
%for 
%a set $E \subset R^{N}$, with 
%which $H^{\beta}( E ) = 0$ for an arbitrary $\beta > \alpha$ and %$H^{\beta}( E ) = \infty$ for $\beta < \alpha$
%, and it
% defines 
%such that Hausdorff dimension $D_{H} \equiv \alpha$ of the set $E$. 
such that
\begin{equation}
H^{\beta}( E )= \left \{
\begin{array}{ll}
0, & \beta > \alpha \\
\infty, & \beta < \alpha
\end{array}
\right..
\label{eq-HausdorffExponent}
\end{equation}
\begin{definition}
 For a set $E \subset \mathbb{R}^{N}$, we call the unique exponent $\alpha$ satisfying (\ref{eq-HausdorffExponent}) its {\em Hausdorff dimension}, and denote it by $D_{H}( E )$.
\end{definition}
This exponent $\alpha$ is defined to be the Hausdorff dimension $D_{H}( E )$ of the set $E$. 
Hausdorff dimension has the following properties:
\begin{enumerate}
 \item Hausdorff dimension is invariant under locally bi-Lipschitz injections,
 \item for a decomposition of $E$ into countably many sets $E = \cup_{k}E_{k}$, we have $D_{H}( E ) = \sup\{ D_{H}( E_{k} )\}$,
 \item the Hausdorff dimension of a ball in $\mathbb{R}^{N}$ is $N$. 
\end{enumerate}
%The Hausdorff dimension is monotonic: if $E \subset F$, then $D_{H}( E ) \le D_{H}( F )$.
Broadly speaking, Hausdorff dimension provides a measure of complexity for sets which is more finely stratified than, say, Lebesgue measure. For example, the Cantor set has Lebesgue meaure zero but Hausdorff dimension $\log(2)/\log(3)$. This makes Hausdorff dimension a very useful tool in the sciences.
We now discuss certain developments around Hausdorff dimension in the numerical study of dynamical systems.
\\

Although there is no rigorous criterion by which one may classify a set as fractal, a fractional Hausdorff dimension is one important indicator of such structure. 
% In this way, ????
As a consequence, Hausdorff dimension has long been used to detect chaotic behavior in dyanamical systems (see, for example, Ott \cite{Ott2002}).
For example, the Hausdorff dimension of a smooth attractor is a geometric invariant and therefore one may use it to classify such attractors.
\\

In practice, Hausdorff dimension is difficult to estimate from data. One of the reasons for this is the definition of the Hausdorff $\alpha$-measures themselves (Eq. \ref{eq-HausdorffMeasure}). Given a set $E$, effective estimation of $H^{\alpha}(E)$ requires one to produce an $\epsilon$-covering of $E$ which is close to optimal. The problem with this is that at different places in $E$ such an $\epsilon$-covering would employ sets of different diameters.
Furthermore, it is difficult to identify the critical exponent $\alpha$ with only estimates of the $\alpha$-measures.
This is one reason that Hausdorff dimension is difficult to estimate.
Another reason comes from property 2 of Hausdorff dimension listed above. The set of points in $E$ which dictate its Hausdorff dimension may be relatively ``small'' which presents difficulty in the estimation.
\\

Most numerical estimators of dimension have sought to evade these difficulties by estimating instead quantities which merely provides approximations to Hausdorff dimension.
The most well-known of these approximations are capacity or box-counting dimension, correlation dimension, and information dimension. We now present each of these in turn along with some estimators. 

\subsubsection{Box-counting dimension}
The idea behind box-counting dimension is that the dimension of a set is related to asymptotic behavior of the number of cubes of side length $\epsilon$ required to cover it as $\epsilon \rightarrow 0$. 
For example, for $\epsilon > 0$, one require $\epsilon^{-N}$ cubes of side length $\epsilon$ to cover the unit cube in $\mathbb{R}^{N}$. This allows us to recover the dimension $N$ as 
\[
 N = \lim_{\epsilon \rightarrow 0} \frac{ \log(\epsilon^{-N}) }{ \log(\epsilon^{-1})}.
\]
This motivates the following general definition:
\begin{definition}
\label{def-Box-counting}
Let $E \subset \mathbb{R}^{N}$. Let us denote by $B( E, \epsilon )$ the minimal number of cubes (or `boxes') of side length $\epsilon$ require to cover $E$.

We define the lower box-counting dimension of $E$ by
\[
 \underline{D}_{B}( E ) := \liminf_{ \epsilon \rightarrow 0 } \frac{ \log( B( E, \epsilon ) ) }{ \log( \epsilon^{-1} ) }.
\]
Similarly, we define the upper box-counting dimension of $E$ by
\[
 \overline{D}_{B}( E ) := \limsup_{ \epsilon \rightarrow 0 } \frac{ \log( B( E, \epsilon ) ) }{ \log( \epsilon^{-1} ) }.
\]
If these quantities agree, we call the common value the {\em box-counting dimension} of $E$, and denote it by $D_{B}( E )$. 
\end{definition}

%Let $F_{0} = { 0, 1/2, 1/2^{2}, 1/2^{3}, \hdots }$. It is easy to see $D_{B}( F_{0} ) = 0$.
%Let $F_{1} = { 0, 1/2, 1/3, 1/4, \hdots }$. Computing the box-counting dimension of the set $F$, we obtain $D_{B}(F_{1}) = 1/2$ \cite{Cutler1993}. Although the set $F$ does not show any ``self-similar'' property in its construction, it may 
%% Rational number example
% \label{eq-HausdorffMeasureFixedScale}
%In fact, if we replace $H^{\alpha}( E, \epsilon )$ of (\ref{eq-HausdorffMeasureFixedScale}) in the definition of Hausdorff definition
If, in the definition of Hausdorff dimension, we take the infimum in (\ref{eq-HausdorffMeasureFixedScale}) over covers $\{ E_{k} \}$ of $E$ where each $E_{k}$ has a diameter exactly $\epsilon$, and if the corresponding limit (\ref{eq-HausdorffMeasure}) exists, we recover the box-counting dimension of $E$.
\\

Incidentally, the above relationship between the definitions of Hausdorff dimension and box-counting dimension shows that, for any $E \subset \mathbb{R}^{N}$,
\begin{equation}
 \label{eq-HDBD}
 D_{H}( E ) \le \underline{D}_{B}( E ).
\end{equation}

Frequently, it is the case that the inequality in (\ref{eq-HDBD}) is an equality.
In a sense, it is the situations in which the inequality is strict which make box-counting dimension a poor notion of dimension. 
For example, box-counting dimension does not satisfy anything like Property (2) of Hausdorff dimension -- 
one cannot calculate the box-counting dimension of a set using a countable decomposition of that set.
As an illustration, let $Q := \mathbb{Q} \cap I$. As $Q$ is dense in $I$, any covering of $Q$ by $\epsilon$-balls must also cover $I$. Therefore, $D_{B}( Q ) = D_{B}( I ) = 1$. 
On the other hand, if we enumerate $Q$ and write 
\[
 Q = \bigcup_{ k } \{ q_{k} \}, 
\]
we have $D_{B}( \{ q_{k} \}) = 0$ for all $k$.
\\

Despite its flaws, box-counting dimension remains popular because it is so easy to estimate. 
We will discuss the estimation of box-counting dimension in Section \ref{sec-Estimation}.

\subsubsection{Correlation Dimension \label{sec-CorrelationDimension}}
Grassberger and Procaccia, in \cite{Grassberger1983}, proposed a method of estimation of fractal dimension. 
However, it was not originally clear what kind of dimension they were estimating. Since then this notion has been formalized to produce what we know today as {\em correlation dimension} -- see, for example, Pesin \cite{Pesin1993}.
In this section, we give a brief account of this historical development.
\\

The basic idea of Grassberger and Procaccia was very similar to the idea which motivated our description of pointwise dimension in Section \ref{sec-PointwiseDimension}. 
The problem that they considered was that of estimating some kind of fractal dimension for an attractor in a dynamical system given a finite number of iterates of some initial values.
They reasoned that the rate at which the cardinality of intersections of $\epsilon$-balls around the data points with the data decayed would provide a reasonable notion of dimension provided that such a rate was well-defined.
\\

To be precise, given the data
\[
 E = \{ x_{1}, x_{2}, \hdots, x_{n} \} \subset \mathbb{R}^{N},
\]
let us fix a norm $\| \cdot \|$ on $\mathbb{R}^{N}$, and define 
the {\it correlation sum}
%then Grassberger and Procaccia suggest that we use 
\begin{equation}
 C( E, \epsilon ) := \frac{1}{n}\sum_{i=1}^{n}
\left[ \frac{1}{n-1}\sum_{ j \neq i } 
\chi_{ [0, \epsilon) }
\left( 
\| x_{i} - x_{j} \|
\right)
\right],
\label{eq-CorrelationSum}
\end{equation}
where $\chi_{ [0, \epsilon) }$ denotes the characteristic function of the interval $[0, \epsilon )$.
Grassberger and Procaccia \cite{Grassberger1983} suggested that estimating the rate of decay of $C( E, \epsilon )$ with $\epsilon$ would provide a good notion of dimension for the set $E$.
While this technique was in some sense very easy to apply numerically, it lacked a rigorous mathematical foundation -- there was no known notion of dimension that the Grassberger-Procaccia method (GP) could be proved to provide an estimate of in general. Implicitly, Grassberger and Procaccia had introduced a completely new notion of dimension -- correlation dimension. The previously cited paper of Pesin \cite{Pesin1993} develops correlation dimension from a vague concept to a rigorous quantity.
\\

One difficulty in formalizing the concept of correlation dimension is that correlation sums (\ref{eq-CorrelationSum}) 
are defined for finite sets of data.
For example, this prevents us from defining the correlation dimension as simply a limit of the correlation sums as $\epsilon \rightarrow 0$. In fact, this particular point introduces much difficulty even in numerical estimation. We will discuss this in detail in Secton \ref{sec-Estimation}.
For the purposes of rigorous formulation, another problem with the finiteness of data is that it is not clear what kind of underlying process of data generation one should consider when trying to define correlation dimension.
In this paper, we will assume that the data is sampled according to a probability measure $\mu$, and therefore we define the correlation dimension {\em of} $\mu$.
\\

Let $\mu$ be a probability measure, and let $\left\{ X_{j} \right\}$ be a sequence of independent random variables with distribution $\mu$. Let us define 
\[
 E_{n} := \{ X_{1}, X_{2}, \hdots, X_{n} \}.
\]
Then, for a fixed $\epsilon > 0$, the limit 
\begin{equation}
 \label{eq-CorrelationLimit}
 C( \mu, \epsilon ) :=  \lim_{ n \rightarrow \infty } C( E_{n}, \epsilon )
\end{equation}
exists almost surely by the strong law of large numbers, and is equal to the probability that $\| X_{1} - X_{2} \| < \epsilon$.
\begin{definition}
 \label{def-CorrelationDimension}
The limit
\[
 D_{C}( \mu ) := \lim_{\epsilon \rightarrow 0} \frac{ \log( C( \mu, \epsilon ) ) }{ \log( \epsilon ) }, 
\]
if it exists, is called the {\em correlation dimension} of $\mu$.
\end{definition}
If $\mu$ is a Borel probability measure on $\mathbb{R}^{N}$, then let us write for $x \in \mathbb{R}^{N}$ 
\[
 F_{\mu, \epsilon}( x ) := \mu( B( x, \epsilon ) ).
\]
In this case, we have 
\[
 C( \mu, \epsilon ) = \| F_{\mu, \epsilon} \|_{1},
\]
and so 
\[
 D_{C}( \mu ) = \lim_{\epsilon \rightarrow 0} \frac{ \log( \| F_{\mu, \epsilon} \|_{1} ) }{ \log( \epsilon ) }.
\]
This suggests that there is an entire family of dimension related to the correlation dimension.
These dimensions are called the $q$-generalized dimensions, were first introduced by Hentschel and Procaccia \cite{Hentschel1983} in a manner similar to that in which correlation dimension was introduced by Grassberger and Procaccia \cite{Grassberger1983}.
Formally, we define the $q$-generalized dimension of $\mu$, if it exists, to be 
\begin{equation}
 D_{C,q}( \mu ) := \lim_{\epsilon \rightarrow 0} \frac{ \log( \| F_{\mu, \epsilon} \|_{q-1} ) }{ \log( \epsilon ) }.
 \label{eq-GeneralizedDimension} 
\end{equation}
We take the norm in $L^{q-1}( \mu )$ above in order to be consistent with the notation used in the numerical literature.
The reason for this convention is apparent from the derivation of Hentschel and Procaccia \cite{Hentschel1983}.
In this paper, we will be concerned mainly with the correlation dimension although much of our discussion of correlation dimension carries over directly to the case of the $q$-generalized dimension.
\\

Note that, in definition, there is quite a bit of similarity between the correlation dimension of $\mu$ (Definition \ref{def-CorrelationDimension}) and its pointwise dimension at some generic point $x$ in its support (Definition \ref{def-PointwiseDimension}).
For example, Cutler \cite{Cutler1993} defines the {\em average pointwise dimension} of a Borel probability measure $\mu$ on $\mathbb{R}^{N}$ by 
\[
 D_{P}( \mu ) := \int_{\mathbb{R}^{N}} D_{\mu}( x ) \D\mu(x).
\]
She points out that, while $D_{P}( \mu )$ reflects the average of the local rates of decay of the measure of $\epsilon$-balls around points in the support of $\mu$, $D_{C}( \mu )$ reflects the rate of decay of the average measure of $\epsilon$-balls around points in the support of $\mu$.
This seems to have been a source of confusion in the past.
We urge the reader to be wary both in reading about and applying these concepts.
%% !!!!!!!!!!!!!!!!!!!!!!!!!!!!!!!!!!!!!!!!!!!!!!!!!!!!
%%% NEERAJ THINK ABOUT THIS A LITTLE MORE!!!!!!!!!!!!!!!!!!!!!!!!!!!!!!!!!!!!!!!!!!!!!!!!!!!
%% !!!!!!!!!!!!!!!!!!!!!!!!!!!!!!!!!!!!!!!!!!!!!!!!!!!!!
\\

To expand upon this, suppose that for each $i = 1, 2$ we have a Borel probability measure $\mu_{i}$ on $\mathbb{R}^N$ with associated  $\epsilon_{i} > 0$ such that for each $0 < \epsilon < \epsilon_{i}$
we have 
\[
 \mu_{i}( B( x, \epsilon ) ) \sim \epsilon^{ \alpha_{i} }.
\]
Let us further assume that the supports of $\mu_{1}$ and $\mu_{2}$ are contained in disjoint open sets in $\mathbb{R}^{N}$ and that $\alpha_{1} < \alpha_{2}$.
Finally, let us define 
\[
 \mu = \frac{1}{2}\mu_{1} + \frac{1}{2}\mu_{2}.
\]
As we will see in (\ref{eq-CorrelationPointwise}), for each $i = 1, 2$, we have 
\[
 D_{C}( \mu_{i} ) = \alpha_{i}.
\]
Now, we also have, for $\epsilon > 0$,
\[
 C( \mu, \epsilon ) = \frac{1}{2}C( \mu_{1}, \epsilon ) + \frac{1}{2}C( \mu_{2}, \epsilon ).
\]
Thus, for $0 < \epsilon < \min( \epsilon_{1}, \epsilon_{2} ) $, we have 
\[
 C( \mu, \epsilon ) \sim \frac{1}{2}\epsilon^{ \alpha_{1} } \left( 1 + \epsilon^{ \alpha_{2} - \alpha_{1} } \right).
\]
Therefore, 
\begin{equation}
 \label{eq-DimensionBlindness}
  D_{C}( \mu ) = \lim_{\epsilon \rightarrow 0} \frac{ \log( \epsilon^{ \alpha_{1} } ) + \log \left( 1 + \epsilon^{ \alpha_{2} - \alpha_{1} } \right) }{\log( \epsilon )} = \alpha_{1}.
\end{equation}
This already highlights one of the major problems with correlation dimension --
it cannot even be used to tell apart a simple mixture of measures such as $\mu$ from its lowest dimensional component!
We call this the {\em dimension blindness} of correlation dimension.
This poses a serious problem with relying upon estimators of correlation dimension in the analysis of data.
We seek to address this problem with our estimator.
%Despite this dimension blindness, there is still value in trying to estimate the correlation dimension of $\mu$ with a GP estimator.

%To expand upon this, suppose that $\mu$ and $\nu$ are Borel probability measures on $\mathbb{R}^{N}$ with associated constant $\mu$
%This implies (from (\ref{eq-CorrelationPointwise})) that, when calculated independently, their correlation dimensions are also $\alpha$ and $\beta$ respectively. Let us assume that $\alpha < \beta$ and the support of $\mu$ and $\nu$ are contained in disjoint open sets. Let us write 
%\[
% \omega := \frac{1}{2} \mu + \frac{1}{2} \nu.
%\]
%For $\epsilon > 0$, we have 
%\[
% C( \omega, \epsilon ) = \frac{1}{2} C( \mu, \epsilon ) + \frac{1}{2} C( \nu, \epsilon ).
%\]
%Therefore, there exists 

%\subsection{On the notions of dimension related to a probability measure \label{sec-MeasureDimensions}}
\subsection{Dimensions of Measures \label{sec-MeasureDimensions}}
%% SKIP FOR NOW (on Nov. 20th): 
%% We will introduce Hausdorff dimension of a measure and relate it with pointwise dimension and correlation dimension.
%% We will be back after reading Young (1982).

Although Hausdorff dimension provides a measure of complexity of {\em sets}, as we have defined it, it is easy to derive from it a notion of dimension for measures. We follow here the development of Cutler \cite{Cutler1993}.
\begin{definition}[]
\label{def-HausdorffDimensionOfMeasure}
 Let $\mu$ be a Borel probability measure on $\mathbb{R}^{N}$. We define the Hausdorff dimension distribution $\mu_{H}$ of $\mu$, which is a Borel probability measure on $[0, N]$, by 
\[
 \mu_{H}( [0, \alpha] ) := \sup\{ \mu( E ) \mid  D_{H}( E ) \le \alpha \}.
\]
We define the Hausdorff dimension of $\mu$ to be 
\[
 D_{H}( \mu ) := \inf\{ \alpha \in [ 0, N ] \mid  \mu_{H}( [0, \alpha] ) = 1 \}.
\]
\end{definition}

At this point, we have described in some detail three notions of dimension associated with a Borel probability measure $\mu$ on $\mathbb{R}^{N}$ -- pointwise dimension, correlation dimension, and Hausdorff dimension.
A lot is known about the relationships between these three dimensions, particularly when $\mu$ has constant pointwise dimension 
\[
 D_{\mu}( x ) = d, 
\]
for almost every $x$ in its support.
Young \cite{Young1982} proved that, when this is the case, we have 
\[
 D_{H}( \mu ) = d.
\]
Pesin \cite{Pesin1993} proved that, in the same case, we have
\begin{equation}
 D_{C}( \mu ) = d. 
 \label{eq-CorrelationPointwise}
\end{equation}
In fact, the results of Young and Pesin are a bit more general. 
Cutler has explored in greater detail the relationship between the distribution of pointwise dimensions of $\mu$ and its corresponding Hausdorff dimension distribution $\mu_{H}$: 
\begin{thm}[Cutler \cite{Cutler1993}, Theorem 3.2.2]
Let $\mu$ be a Borel probability measure on $\mathbb{R}^{N}$.
Let us consider the following subset  of the support of $\mu$:
\[
 D_{\alpha} := \{ x \mid \underline{D}_{\mu}( x ) \le \alpha \}.
\]
Then $D_{H}( D_{\alpha} ) \le \alpha$, and $\mu_{H}( [0, \alpha] ) = \mu( D_{\alpha} )$.
\end{thm}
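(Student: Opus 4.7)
The plan is to prove the two assertions in sequence, using a Vitali covering argument for the first, and then leveraging it together with a mass distribution argument for the second.

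For the bound $D_H(D_\alpha) \le \alpha$, I would fix $\beta > \alpha$ and show $H^\beta(D_\alpha) < \infty$. By the definition of $\underline{D}_\mu(x)$, every $x \in D_\alpha$ admits a sequence $\epsilon_n(x) \downarrow 0$ with $\log \mu(B(x,\epsilon_n(x)))/\log \epsilon_n(x) < \beta$, which rearranges (remembering that both logs are negative) to $\mu(B(x,\epsilon_n(x))) \ge \epsilon_n(x)^\beta$. For any prescribed scale $\delta > 0$, selecting one $\epsilon_x \in (0,\delta)$ from this sequence for each $x \in D_\alpha$ yields a fine cover of $D_\alpha$ by balls $B(x,\epsilon_x)$. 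Applying the Vitali $5r$-covering lemma, I extract a disjoint subfamily $\{B(x_i,\epsilon_{x_i})\}$ such that $\{B(x_i,5\epsilon_{x_i})\}$ still covers $D_\alpha$. Disjointness and the lower mass bound then give
\[
 \sum_i (5\epsilon_{x_i})^\beta \le 5^\beta \sum_i \mu(B(x_i,\epsilon_{x_i})) \le 5^\beta,
\]
so $H^\beta(D_\alpha,5\delta) \le 5^\beta$, and hence $H^\beta(D_\alpha) \le 5^\beta < \infty$. This forces $H^{\beta'}(D_\alpha) = 0$ for all $\beta' > \beta$, so $D_H(D_\alpha) \le \beta$; letting $\beta \downarrow \alpha$ closes out this half.

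For the identity $\mu_H([0,\alpha]) = \mu(D_\alpha)$, the inequality $\mu_H([0,\alpha]) \ge \mu(D_\alpha)$ is immediate from the first part, since $D_\alpha$ is then admissible in the supremum defining $\mu_H([0,\alpha])$. For the reverse, I must show that any Borel set $E$ with $D_H(E) \le \alpha$ satisfies $\mu(E \setminus D_\alpha) = 0$. Write $E \setminus D_\alpha = \bigcup_{\beta > \alpha} A_\beta$ where $A_\beta := \{x \in E : \underline{D}_\mu(x) \ge \beta\}$, and fix rational $\beta > \gamma > \alpha$. For $x \in A_\beta$ there is $n(x)$ with $\mu(B(x,\epsilon)) \le \epsilon^\gamma$ for all $\epsilon < 1/n(x)$; stratify $A_\beta = \bigcup_n A_\beta^n$ by the level sets of this threshold. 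Because $D_H(E) \le \alpha < \gamma$, we have $H^\gamma(E) = 0$, so for any $\eta > 0$ there is a cover $\{U_k\}$ of $E$ by sets of diameter less than $1/n$ with $\sum_k (\diam U_k)^\gamma < \eta$. Each $U_k$ meeting $A_\beta^n$ lies in a ball $B(x_k,\diam U_k)$ around some $x_k \in A_\beta^n$, and the defining inequality gives $\mu(U_k) \le (\diam U_k)^\gamma$. Summing yields $\mu(A_\beta^n) \le \eta$, hence $\mu(A_\beta^n) = 0$, then $\mu(A_\beta) = 0$, and finally $\mu(E \setminus D_\alpha) = 0$ by taking the countable union over rational $\beta \downarrow \alpha$.

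The main technical obstacle is the second part: one must convert the purely local hypothesis $\underline{D}_\mu(x) > \alpha$ into a uniform mass bound on balls of controlled radius, which is what the stratification $A_\beta = \bigcup_n A_\beta^n$ accomplishes. This is the standard trick needed to bridge liminf-type pointwise data with the simultaneous scale requirement of Hausdorff measure estimates. The first part is comparatively routine once one recognizes that the reverse inequality $\mu(B(x,\epsilon)) \ge \epsilon^\beta$ along a subsequence is exactly what feeds a Vitali packing argument.
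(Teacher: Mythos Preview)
The paper does not actually prove this theorem: it is quoted from Cutler \cite{Cutler1993} and stated without proof, after which the authors immediately pivot to sketching the much simpler fact (\ref{eq-CorrelationPointwise}). So there is no ``paper's own proof'' to compare against.

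That said, your argument is correct and is essentially the standard one found in Cutler's monograph and in texts such as Falconer. The first half is the usual Vitali $5r$-packing computation: the inequality $\underline{D}_\mu(x) \le \alpha < \beta$ gives, along a subsequence of radii, the mass lower bound $\mu(B(x,\epsilon)) \ge \epsilon^\beta$, and disjointness converts this into a uniform upper bound on $\sum (\diam)^\beta$. One cosmetic point: the covering sets $B(x_i,5\epsilon_{x_i})$ have diameter $10\epsilon_{x_i}$, so the bound is $10^\beta$ rather than $5^\beta$; this of course does not affect the conclusion. The second half is the mass distribution principle, and your stratification $A_\beta = \bigcup_n A_\beta^n$ is exactly the device needed to turn the pointwise liminf control into a uniform bound $\mu(B(x,\epsilon)) \le \epsilon^\gamma$ valid for all $\epsilon < 1/n$ simultaneously. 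The passage from $H^\gamma(E)=0$ to $\mu(A_\beta^n)=0$ via a fine cover is clean.

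Two small omissions worth noting, both routine: you should remark that $D_\alpha$ is Borel (since $x \mapsto \mu(B(x,\epsilon))$ is lower semicontinuous for open balls, so $\underline{D}_\mu$ is Borel measurable), and that the disjoint subfamily extracted by Vitali is automatically countable in $\mathbb{R}^N$, which is what makes the sum $\sum_i \mu(B(x_i,\epsilon_{x_i})) \le 1$ legitimate.
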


To give a rough idea of the kind of reasoning required to prove these results, let us consider (\ref{eq-CorrelationPointwise}). 
Suppose that $\mu$ satisfies the following condition:
there exist constants $c, C  >0$  and $\epsilon_{0} > 0$ such that,
for all points $x$ in the support of $\mu$ and all $0 < \epsilon < \epsilon_{0}$, 
\begin{equation}
 \label{eq-UniformScaling}
 c \epsilon^{d} < \mu( B( x, \epsilon ) ) < C \epsilon^{d}. 
\end{equation}
Then it is easy to see that $D_{C}( \mu ) = d$ as, for $0 < \epsilon < \epsilon_{0}$, we have
\[
 c \epsilon^{d} < C( \mu, \epsilon ) < C \epsilon^{d}.
\]
For a general Borel probability measure $\mu$ with almost everywhere constant pointwise dimension $d$, there is no such garantee of uniformity in local scaling.
The trick to proving the general result is to observe that any such measure can be written as a limit of measures satisfying the condition (\ref{eq-UniformScaling}).
To see this, we can restrict $\mu$ to those points in its support at which the condition is satisfied for various values of $c, C,$ and  $\epsilon_{0}$. $\mu$ is a limit of such restrictions.
\\

%The characteristic functions of these sets will be measurable functions 

%Then it is easy to see that the limit of the correlation sums for $\mu$ as $\epsilon \rightarrow 0$ is 

The condition that $\mu$ has constant pointwise dimension almost everywhere in its support is not an exceptional one as far as dynamical data is concerned -- for example, if $\mu$ is ergodic for a map $f$, then this condition is satisfied under quite general constraints on $f$. For more details, refer to Cutler (\cite{Cutler1990,Cutler1992}).
We state here a simplified version of Cutler's results:
\begin{prop} %[\cite{Cutler1993}, Theorem 4.1.10]
 \label{prop-ErgodicityExactDimension}
 Let $f : X \rightarrow X $  be a locally bi-Lipschitz, ergodic map with ergodic measure $\mu$ for which the pointwise dimension exists almost everywhere. Then $\mu$ has constant pointwise dimension almost everywhere in its support.
\end{prop}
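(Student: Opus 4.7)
The strategy is to exploit ergodicity in the form: any essentially $f$-invariant measurable function is $\mu$-a.e.\ constant. Accordingly, I would first show that $x \mapsto D_\mu(x)$ is ``one-sided'' $f$-invariant, in the sense $D_\mu \circ f \le D_\mu$ $\mu$-a.e., and then transfer this to sublevel sets where genuine $f$-invariance mod null sets can be extracted.

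For the local comparison, note that $\mu$ being $f$-ergodic entails $f$-invariance: $\mu(f^{-1}(A)) = \mu(A)$ for all measurable $A$. Fix $x$ at which $D_\mu(x)$ exists and at which the local bi-Lipschitz property of $f$ holds with constants $r_x > 0$ and $K_x \ge 1$. For $\epsilon > 0$ small enough that $\epsilon / K_x < r_x$, the bi-Lipschitz inequality on $B(x, r_x)$ yields the one-sided inclusion $B(x, \epsilon/K_x) \subseteq f^{-1}(B(f(x), \epsilon))$, so by invariance
\[
\mu( B( f(x), \epsilon ) ) \;=\; \mu( f^{-1}( B( f(x), \epsilon ) ) ) \;\ge\; \mu( B( x, \epsilon/K_x ) ).
\]
Taking logarithms, dividing by the negative number $\log \epsilon$ (which reverses the inequality), and letting $\epsilon \to 0$ while absorbing the bounded term $\log K_x$ into a vanishing correction gives $\overline{D}_\mu(f(x)) \le D_\mu(x)$. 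Since the set on which $D_\mu$ is defined has full $\mu$-measure and is $f$-invariant modulo null sets (again by $\mu$-invariance), this upgrades to $D_\mu(f(x)) \le D_\mu(x)$ for $\mu$-a.e.\ $x$.

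With this in hand, for each $\alpha \ge 0$ let $A_\alpha := \{ x : D_\mu(x) \le \alpha \}$. The a.e.\ inequality yields $A_\alpha \subseteq f^{-1}(A_\alpha)$ mod null, and combining with $\mu( f^{-1}(A_\alpha) ) = \mu( A_\alpha )$ forces equality mod null. Ergodicity then gives $\mu(A_\alpha) \in \{0,1\}$ for every $\alpha$; setting $\alpha^* := \inf\{ \alpha : \mu(A_\alpha) = 1 \}$ and taking rational sequences $\alpha_n \nearrow \alpha^*$ and $\alpha_n \searrow \alpha^*$ to squeeze $\mu\{ D_\mu < \alpha^* \}$ and $\mu\{ D_\mu > \alpha^* \}$ to zero gives $D_\mu \equiv \alpha^*$ $\mu$-a.e.\ in the support of $\mu$.

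The main obstacle is the one-sidedness of the ball comparison: since the paper's definition of locally bi-Lipschitz forces only local injectivity on $B(x, r_x)$, the map $f$ may collect additional preimages of $f(x)$ from outside this neighborhood, so the naive reverse inclusion $f^{-1}( B( f(x), \epsilon ) ) \subseteq B( x, K_x \epsilon )$ can fail and pointwise equality $D_\mu \circ f = D_\mu$ is unavailable by this elementary route. Routing the argument through monotone sublevel sets rather than through an identity on $D_\mu \circ f$ is precisely the workaround that makes a one-sided estimate sufficient.
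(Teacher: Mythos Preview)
Your proof is correct and a bit more careful than the paper's. The paper's argument is two sentences: since $\mu$ is $f$-invariant, $f$ is an ``equivalence'' of $\mu$ with itself in the sense of Section~\ref{sec-PointwiseDimension}, hence $D_\mu$ is $f$-invariant as a function, and ergodicity then forces it to be constant a.e. In other words, the paper asserts the two-sided identity $D_\mu\circ f = D_\mu$ directly and applies ergodicity to the function rather than to sublevel sets. You instead prove only $D_\mu\circ f \le D_\mu$ a.e.\ and extract genuine invariance at the level of sublevel sets $A_\alpha$. The payoff of your route is exactly the point you flag at the end: the paper's notion of ``equivalence'' requires a locally bi-Lipschitz \emph{injection}, whereas the proposition's hypotheses only say ``locally bi-Lipschitz,'' so the reverse inclusion $f^{-1}(B(f(x),\epsilon))\subseteq B(x,K\epsilon)$ need not hold globally. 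Your one-sided estimate plus the measure-preserving identity $\mu(f^{-1}A_\alpha)=\mu(A_\alpha)$ sidesteps this cleanly. The cost is a slightly longer argument; the paper's version is shorter but tacitly leans on injectivity (or at least on the absence of extra preimages near $f(x)$) that is not stated in the hypotheses.
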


\begin{proof}
 Since $\mu$ is invariant under $f$, $f$ defines an equivalence between $\mu$ and itself in the sense of 
(\ref{eq-PointwiseEquivalence}) in Section \ref{sec-PointwiseDimension}. This tells us that the function $D_{\mu}(x)$ is $f$-invariant. 
As $f$ is ergodic, this means that $D_{\mu}(x)$ is constant.
\end{proof}

This can form the basis for a test of ergodicity. We will discuss this in Section \ref{sec-FutureDynamicalEstimation}. %\ref{sec-Applications}.

%% NEERAJ: Proof of ``if pointwise dimension is constant almost everywhere, then correlation dimension is equalt ot pointwise dimension.''

\section{Numerical Estimation \label{sec-Estimation}}

In this section, we demonstrate how one may use the idea of Grassberger and Procaccia \cite{Grassberger1983} to estimate the correlation dimension of a measure. We call such methods {\em GP estimators}. 
Currently, GP estimators seem to be the most frequently used estimators of fractal dimension in the applied sciences. We begin by presenting the data to which we will apply the estimator, along with a theoretical analysis.
We then describe how one would design a GP estimator for this data. 
We present the results of an analysis of the data with such an estimator. 
Finally, we conclude by discussing some of the drawbacks of GP estimators.
This will motivate our estimator of pointwise dimension, which we present in the next section.
\\

\subsection{The data \label{sec-GPData}}
We will use three measures in our demonstration of the GP estimator.
The first probability measure that we use is the Cantor measure $\xi$, which we defined in Section \ref{sec-PointwiseDimension}.
%The second measure we use is the push-forward $f_{*} \xi $, where $f$ is the function $f(x) = x^2$.
%Explicitly, 
%\[
% f_{*} \xi ( U ) = \xi( f^{-1}( U ) ).
%\]
The second measure we use is 
\[
  \zeta := \frac{1}{2} \left( \xi + \xi' \right),
\]
where $\xi'$ is the measure derived from $\xi$ by mapping the interval $[0,1]$ linearly onto $[2, 7]$. %% Modified a little bit from [2, 4].
This represents the image of $\xi$ under a piecewise linear transformation.
The third measure that we use is a mixture of two measures with different dimensional behavior:
\[
 \eta := \frac{1}{3} \delta_{0} \times \xi + \frac{2}{3} \left( \xi * \delta_{1}  \right) \times \left( \xi * \delta_{1} \right),
\]
where $\delta_{\alpha}$ denotes the point-mass at $\alpha \in \mathbb{R}$.
\\

% \label{eq-DimensionBlindness}
Before we begin our analysis, we must calculate the correlation dimensions of each of these measures.
The calculation for the first two measures follows from (\ref{eq-CorrelationPointwise}) upon observing that the pointwise dimension of those measures is equal to $\log( 2 )/ \log( 3 )$ almost everywhere in their supports. 
For the measure $\eta$, observe that 
\[
 D_{C}( \delta_{0} \times \xi ) = \frac{ \log( 2 )}{ \log( 3 )},
\]
and 
\[
 D_{C}( \left( \xi * \delta_{1}  \right) \times \left( \xi * \delta_{1} \right) ) = 2 \frac{ \log( 2 )}{ \log( 3 )}.
\]
The calculation of the correlation dimension of $\eta$ essentially follows the calculation (\ref{eq-DimensionBlindness}), and we have 
\[
 D_{C}( \eta ) = \frac{ \log( 2 ) }{ \log( 3 ) }.
\]
Once again, dimension blindness rears its ugly head. Still, it is interesting to see how it manifests itself in numerical estimation.
%For the measure $\mu$, observe that, for each $\epsilon > 0$, the quantity 
%\[
% C( \mu, \epsilon ) = \frac{1}{3} C( \delta_{0} \times \xi, \epsilon )
%+ \frac{2}{3} C( \left( \xi * \delta_{1}  \right) \times \left( \xi * \delta_{1} \right), \epsilon ).
%\]

\subsection{Grassberger-Procaccia Estimators}

%To summarize, we have begun to fill in the following table:
%\[
% \begin{array}{ | c | c | c | }
% \hline
% \text{Measure} & \text{Correlation dimension} & \text{GP estimate} \\
% \hline
% \xi & \log( 2 ) / \log( 3 ) & ?? \\
% \hline
% f_{*}\xi & \log( 2 ) / \log( 3 ) & ?? \\
% \hline
%  \mu & \log( 2 ) / \log( 3 ) & ?? \\
% \hline
% \end{array}
%\]

Suppose that we are given a dataset $\mathcal{D}$ of $n$ points sampled from a Borel probability measure $\mu$ on $\mathbb{R}^{N}$, and that we wish to use this data to estimate the correlation dimension of $\mu$.
The basic idea in any GP estimator of correlation dimension is to use the correlation sums 
defined in (\ref{eq-CorrelationSum}) at various scales to perform this estimation according to Definition \ref{def-CorrelationDimension}.
Since the data is only available at a coarse resolution, some delicacy is required in inferring the limiting behavior of the correlation sums from the few which are available from the data.
\\

What one often does is evaluate correlation sums at various scales, decides which range of scales contains the most information about the correlation dimension, and finally use a linear model to estimate the quantity
\[
 \frac{\log( C( \mu, \epsilon ) )}{ \log( \epsilon ) }
\]
over this range.
%There is no better technique available for the analysis of general data.
This burden on the analyst to choose an informative range of scales is problematic as it can potentially be a large source of bias in the analysis.
We will experience the difficulty of making this choice in our analysis of the test data.

%\subsubsection{Choice of embedding parameters}
%Essex et al (1987, 1988) have studied the shrinking of the potential scaling region due to increasing embedding dimension 

%\bf{Arbitraryness of }

\subsection{Analysis \label{sec-GPAnalysis}}

 \begin{figure}[htb, clip]
   \begin{center}
    \includegraphics[width=1\linewidth, clip]{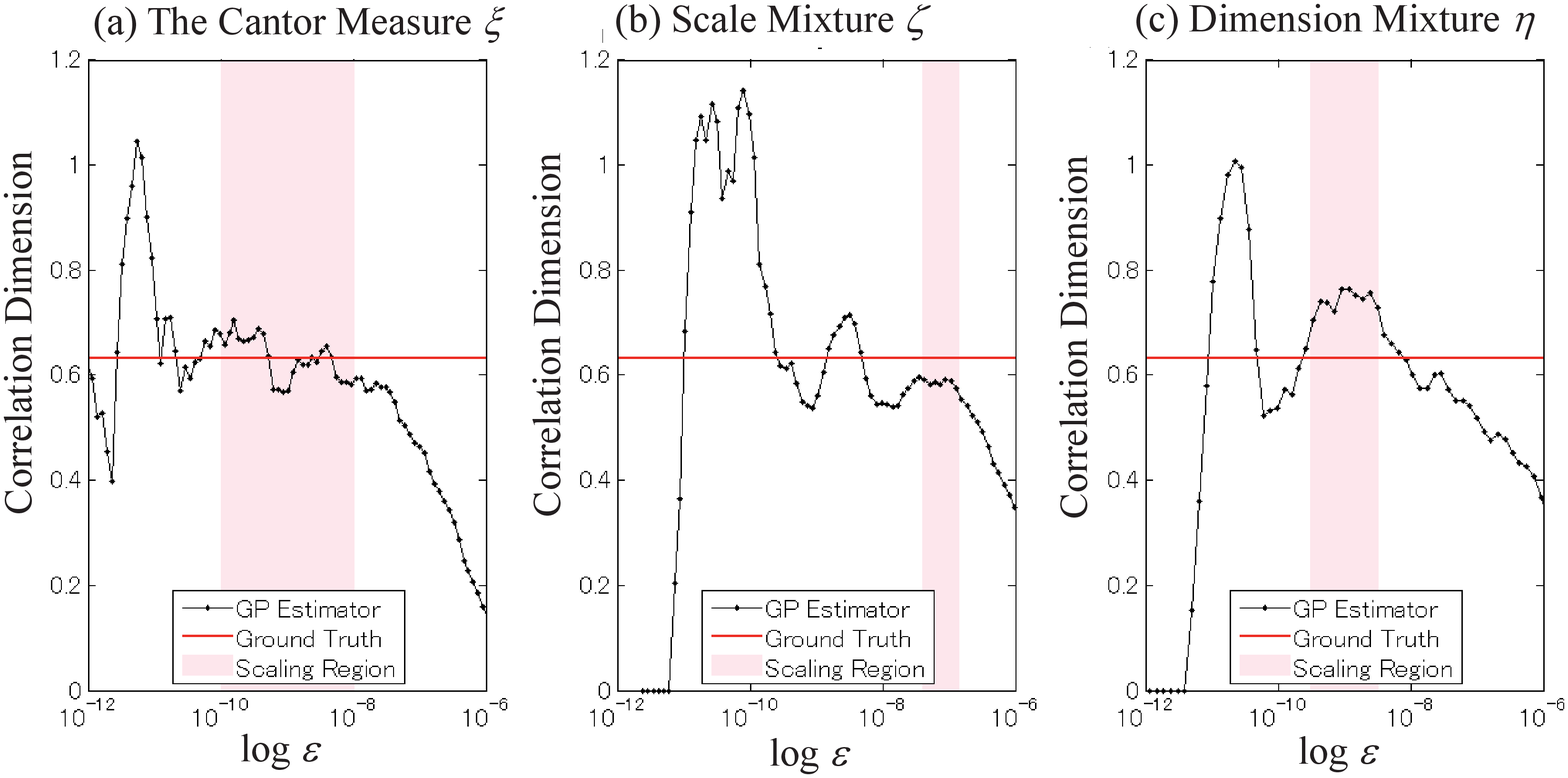}
    \caption{ The GP estimator (a) on a dataset sampled from the 1-dimensional Cantor set and (c) on a dataset sampled from an object which is mixture of 1- and 2-dimenisonal Cantor set. (b) The GP estimator on the same dataset as (a) transformed with a piecewise linear function.
    \label{fig-GPAnalysis} }      %%%% B <-> C (No noise)
    \end{center}
  \end{figure}

%% Connecting words

%demonstrated in one of a basic self-similar object, Cantor set (Figure 1).
% In this demonstration, we generated a set of 10,000 random samples of ternary numbers drawn from the Cantor set with 30 trits, and performed the GP procedure to estimate the correlation dimension. The theoretical correlation dimension is 0.6309297536 (the red line in Figure 1(A)). 

In Figure \ref{fig-GPAnalysis}, we present graphs of $\log( C( \mu, \epsilon ) )$ as functions of $\log( \epsilon )$ for each of the measure $\mu = \xi, \zeta, \eta$ of Section \ref{sec-GPData}.
We sampled 10,000 points from each measure up to a precision of 30 ternary digits for this analysis.
We have indicated our choices of informative regions on each graph by shading them. 
The horizontal line reflects the true correlation dimension of $\log( 2 )/ \log( 3 )$ in each graph.
\\

Observe that, after taking logs of the correlation sums and the parameters $\epsilon$, the informative range of scales is the one over which the regression line for the data is closest to horizontal and the total least square error is relatively small. It is difficult to state more precise criterion for choosing this region.
For example, in Figure \ref{fig-GPAnalysis}(b), there seems to be no clear choice of informative scaling region.
\\

From Figure \ref{fig-GPAnalysis}, we obtain the following estimates (the ground truth is $\log( 2 )/ \log( 3 ) = 0.63093$):
\begin{enumerate}
 \item $D_{C}( \xi ) \approx 0.6305 \pm 0.0411$, %% CorrDim = 0.63058394 + 0.04105620  
 \item $D_{C}( \zeta ) \approx 0.5859 \pm 0.0066$,  %% CorrDim = 0.58594562 + 0.00657146 
 \item $D_{C}( \eta ) \approx 0.7411 \pm 0.0194$. %% CorrDim = 0.74114930 + 0.01935659 
\end{enumerate}

From Figure \ref{fig-GPAnalysis}(c), one may be tempted to claim that the effects of the mixture are observable -- there is a distinct bump in the graph over the range $10^{-11} < \log( \epsilon ) < 10^{-10}$.
Such a claim is not easily justifiable, however, for the same behavior is present in Figure \ref{fig-GPAnalysis}(a).
\\

Figure \ref{fig-GPAnalysis}(b) is of particular interest, for it shows that the GP estimates of correlation dimension are not sensitive even to piecewise linear transformations of the data.

\section{The Estimator \label{sec-Estimator}}

\subsection{ Guiding Principles \label{sec-Principle} }
%% Problems with GP estimators
We observed, in Section \ref{sec-Estimation}, some of difficulties involved in estimating dimension using a GP estimator. Broadly, these difficulties arise from two sources: 
\begin{enumerate}
 \item The dimension blindness of correlation dimension.
 \item The onus on the user to choose correct scale in analysis.
\end{enumerate}
These are important issues to address when designing a new estimator.
\\

%%% dimension blindness of correlation dimension
Bearing in mind the dimension blindness of correlation dimension, it seems desirable to instead estimate a dimension which is more sensitive to the local structure of a given probability measure.
Therefore, we choose to estimate pointwise dimension.
This enables us to treat the dimension as a distribution, which has certain statistical advantages beyond the elimination of blindness of varying dimensions. This will become clear in our numerical analysis of the estimator.
\\

%%% No clear choice of scale for analysis/ inability to take limits with data
Overcoming the problem of the sensitivity of data to scale is much more difficult.
The reason that this sensitivity is such a problem in GP estimators is that Grassberger and Procaccia \cite{Grassberger1983} explicitly depend upon {\em limiting behavior} to produce good estimates.
Addressing this problem in the case of an estimator of pointwise dimension is not easy because pointwise dimension is fundamentally a local quantity.
We utilize the idea that distances of data points to their nearest neighbors contain information about local dimensions of the measure that the data is sampled from. 
We expand upon this idea in more detail in the next section.
\\

%% Problems with Cutler
%%% Expensive!
Finally, attempting to estimate pointwise dimension carries with it certain difficulties which are not present when estimating correlation dimension.
The principal problem is that to get a true sense of distribution of pointwise dimensions over the data can be computationally very expensive. 
As a result of this computational difficulty, although schemes to estimate pointwise dimension have previously been suggested (for example, Cutler \cite{Cutler1993}), there has been no large scale implementation that we are aware of. In our estimator, we utilize clustering techniques to mitigate this cost.
The idea is to identify data points at which the underlying measure seems to have similar scaling properties.
We discuss this in greater detail in Section \ref{sec-AlgorithmOverview} and \ref{sec-VBApproximation}. %\ref{sec-ClusteringPhase}.
\\

Finally, one situation in which it was difficult to choose scales for analysis was when the data had been transformed by a locally bi-Lipschitz injection.
Although we have not emphasized this point very much in our discussion so far, it was a primary motivation in developing our estimator.
As pointwise dimension is purely local quantity, an estimator of pointwise dimension should naturally be less sensitive to the effects of such transformations.
In fact, we incorporate such insensitivity to local scaling in our estimator in the form of local density parameters,
this is certainly something that the reader should be aware of as the proceed through this paper.
\\

%% Our proposal
%%% Use pointwise dimension
%%% Use waiting time distribution to get information about measure
%%% Use clustering to make estimation cheaper.
To summarize, there are four points that the reader should keep in mind throughout the rest of this paper:
\begin{enumerate}
 \item {\bf Pointwise dimension}: We estimate pointwise dimension, which is sensitive to differences in scaling behavior over the data.
 \item {\bf Limit-free description}: We utilize a limit-free description of pointwise dimension, which makes estimation more effective.
 \item {\bf Transformation invariance}: We introduce a local density parameter which gives our estimator the flexibility to cope with data which has been transformed by a locally bi-Lipschitz injection.
 \item {\bf Clustering}: We use clustering techniques to identify data points with similar pointwise dimension characteristics, which saves on computational cost.
\end{enumerate}

\subsection{A limit-free description of pointwise dimension \label{sec-LimitFree}}
%% Waiting time distributions
%% Derive directly the wayiting time distribution corresponding to mu
%% Studying what happens in the limit
%% Does this justify Poisson process formulation?

%% High level description of algorithm
%% Appendix Psudo code

%Let us begin with an example. We consider the problem of estimating the pointwise dimension of the uniform measure $u_{[0, 1]}$ from data specified by $n$ independent random variables $X_{1}, X_{2}, \hdots, X_{n}$ sampled from it.

%Given cumulative studies on dimension theory, we are motivated to develop an efficient estimator of pointwise dimension (Eq. \ref{eq-PointwiseDimension}) for a set of sampled data from an underlying measure $\mu$.
%Since it is impossible to take the limit of $\epsilon \rightarrow 0$ rigorously, we need to work on a condition which allows us to infer the scaling behavior at small scale of $\epsilon$ from a finite scale.

%Our goal in this section is to show that the sample distribution of distances from data points to their nearest neighbors provides an estimator for the distribution of pointwise dimensions of a large class of generating measures $\mu$.
%\\
In this section, we develop an estimator of pointwise dimension which utilizes the distances from data points to their nearest neighbors. 
The distributions of these distances are approximated by waiting time distributions for Poisson processes.
Such approximations provide an effective means of estimation for a large class of generating measures $\mu$.
Cutler and Dawson \cite{Cutler1989,Cutler1990NN} have similar results for different classes of measures using very different techniques.
\\

Let us begin by considering a Borel probability measure $\mu$ which satisfies the following {\em local uniformity condition}:

Suppose that the pointwise dimension $D_{\mu}( x )$ is defined at every point $x$ in the support of $\mu$.
Suppose further that, at every such point $x$, there is a constant $\delta_{x} > 0$ and $\epsilon' > 0$ such that, for all $0 < \epsilon < \epsilon'$, 
\begin{equation}
  \mu( B(x, \epsilon ) ) =  \delta_{x} \epsilon^{D_{\mu}( x )}.
 \label{eq-LocalUniformity}
\end{equation}
%$\lim_{ r \rightarrow 0 } \frac{ \log \delta }{ \log r } = 0$.
%Specifically, suppose that 
%the limit in Eq. \ref{eq-PointwiseDimension} is achievable up to a positive multiplier $\delta > 0$ for the radius equal to or smaller than $r_{0} > 0$. Then we obtain $\log \mu( B( x, r ) ) = \log \delta + D_{\mu}( x ) \log r$  for $0 < r \le r_{0}$  where it holds $\lim_{ r \rightarrow 0 } \frac{ \log \delta }{ \log r } = 0$ (a similar relaxation is used in the simplified illustration of the Theorem 3.2.2. in Cutler (1993)). Let us call this the {\it local uniformity condition}.

%Let $\mu_{0}$ be any Borel probability measure such that
%\[
% \mu_{0}( B( x, \epsilon ) ) = \delta \epsilon^{ d ( x, \epsilon ) },
%\]
%and
%\[
% \lim_{ \epsilon \rightarrow 0 } d ( x, \epsilon ) = D_{\mu}( x ).
%\]

The main difficulty in obtaining the limit-free description of $D_{\mu}( x )$ is 
that it is difficult to make general statements about the infinitesimal behavior of $\mu$ at $x$.
We will show that, 
as far as measures satisfying the local uniformity condition are concerned,
one may obtain effective estimates of pointwise dimension by assuming 
that the infinitesimal behavior is that of a Poisson process.
\\

Let $\mu$ be a Borel probability measure on $\mathbb{R}^{N}$ satisfying the local uniformity condition (\ref{eq-LocalUniformity}) above,
and suppose we would like to estimate $D_{\mu}(x)$ given some data sampled from $\mu$.
Let us say that the question of whether or not there is a data point at distance $r$ from $x$ is determined by a Poisson process $Z(r)$ with rate $\delta_{x} r^{D_{\mu}( x )}$.
Explicitly $Z(r)$ counts the number $n$ of data points at distance up to $r$ from $x$, and has density function.
\begin{equation}
 \zeta( n ) = \frac{ \delta_{x}^{n} r^{nD_{\mu}( x )} }{ n! } \exp\left( - \delta r^{D_{\mu}( x )} \right)
 \label{eq-PoissonDistribution}
\end{equation}
Let us define, for $n > 0$, the waiting times
\[
 R_{n}( x ) := \sup\{ r > 0 \mid Z(r) < n \}.
\]
These are the distances from $x$ to its $n^{\text{th}}$ nearest data points.
The probability densities of the random variables $R_{n}( x )$ can be easily calculated from that of $Z( r )$ as 
\begin{equation}
 \phi_{n}( r ) = \frac{ D_{\mu}(x) \delta_{x}^{n} r^{nD_{\mu}(x) - 1} }{ (n-1)! } \exp\left( - \delta_{x} r^{ D_{\mu}(x) } \right).
\label{eq-NearestNeighborDistribution}
\end{equation}
The special case $n = 1$ has been derived in Cutler as a log gamma distribution (\cite{Cutler1992NN}; See also \cite{Cutler1993} Definition 7.2.1) and \cite{Badii1985}. %Badii and Polti (198X).
%The Poisson distribution (\ref{eq-PoissonDistribution}) is a parametric form which the $n$-th order correlation sum approximates, namely the probability to find n points or more within a given ball of radius $r$. 
The essentially same form as (\ref{eq-PoissonDistribution}) is also used by 
%Nerengerg and Christopher \cite{Nerengerg2002} 
Nerenberg and Essex \cite{Nerenberg1990}
in the context of error analysis correlation dimension. 
%% ftp://metron.sta.uniroma1.it/RePEc/articoli/2004-1-63-71.pdf
\\

Since $\mu$ satisfies the local uniformity condition, it looks like a uniform measure in a small ball around $x$.
Specifically, it looks like a uniform measure which arises from the Poisson process described above by conditioning on the region where the data is sampled.
As a result, the distribution corresponding to (\ref{eq-NearestNeighborDistribution}) is a good approximation to the true nearest neighbor distribution at small scales.
This means that, if the data is fine enough, (\ref{eq-NearestNeighborDistribution}) provides an effective way of estimating the pointwise dimension.
As we will see in our analysis of our estimator, this sensitivity to scale of the approximation is not very marked.
\\
%This provides informal justification for the use of the distribution corresponding to (\ref{eq-NearestNeighborDistribution}).

We end this section by showing how the limit-free description we have given above corresponds to a more conventional derivation of pointwise dimension.

Let us denote the distribution function of (\ref{eq-NearestNeighborDistribution}) by $F_{n}( r ) = \int_{0}^{r} \phi_{n}( r ) \D r$. Specifically, it is given by an incomplete gamma function:
\begin{equation}
 F_{n}( r ) = \int_{0}^{ \delta r^{ D_{\mu}(x) }} t^{ n - 1 } \exp( - t ) \D t.
\label{eq-NNCumulativeDistribution}
\end{equation}

In fact, $F_{n}( r )$ gives the same asymptotic scaling behavior as $\mu$.
Namely, using LHopital's rule, we get
\begin{equation}
\label{eq-PointwiseDimensionLimit}
 \lim_{r \rightarrow 0}\frac{1}{n}\frac{ \log F_{n}( r )}{ \log r } = D_{\mu}(x).
\end{equation}

As the densities $\delta_{x}$ are allowed to vary, and since we are trying to estimate the {\em distribution} of pointwise dimensions of $\mu$, it will be useful for us to distinguish between the distributions $F_{n}( r )$ corresponding to different choices of parameters in (\ref{eq-LocalUniformity}). Therefore, we consider the conditional distributions 
\[
 F_{n}( r \mid \delta_{x}, D_{\mu}(x) ).
\]
We denote their corresponding probability density functions by 
\[
 \phi_{n}( r \mid \delta_{x}, D_{\mu}(x) ).
\]

In practice, the sample probability density is often a better approximation to the true probability density of the distribution being sampled from than the sample distribution is of the distribution.
Thus we seek to approximate the probability density functions $\phi_{n}( r \mid \delta_{x}, D_{\mu}(x) )$
from the data given to us.
For a given $n$, 
we call the estimate this gives us for $D_{\mu}( x )$  the $n^{\text{th}}$ {\em nearest-neighbor dimension}.
\\

As we have seen, the $n^{\text{th}}$ nearest-neighbor dimensions provide good estimates to pointwise dimensions for measures satisfying the local uniformity condition (\ref{eq-LocalUniformity}).
We remark that there are certainly pathological measures for which this condition does not hold -- for example, the measures defined in Billingsley \cite{Billingsley1961}, which Cutler \cite{Cutler1993} terms ``fractal measures''.
It is an interesting problem to determine how one may give a limit-free description of pointwise dimension in the case of such measures.
\\

As a final point of note, it is the densities $\delta_{x}$ in this description which enables us to build into our algorithm the desired insensitivity to locally bi-Lipschitz injections of the data which we discussed in the previous section.

\subsection{Algorithm Overview \label{sec-AlgorithmOverview}}
Suppose that we are given a sequence
\[
 \mathcal{D} = \{ x_{1}, x_{2}, \hdots, x_{k} \} 
\] 
in $\mathbb{R}^{N}$ of data sampled from a Borel probablity measure $\mu$. 
For each $1 \le j \le k$ and each $1 \le i < k$, let us denote by $r_{j}( i )$ the distance from $x_{j}$ to its $i^{\text{th}}$ closest neighbor in $\mathcal{D}$.
For each $1 \le i < k$, we define the sequence 
\[
 \mathcal{R}_{i} := \{ r_{1}( i ), r_{2}( i ), \hdots, r_{k}( i ) \}.
\]

In our algorithm, a parameter $n$ is specified at the outset. Our goal will be to use the $n^{\text{th}}$-nearest neighbor distances $\mathcal{R}_{n}$ of the data $\mathcal{D}$ in order to obtain an approximation 
\begin{equation}
 \label{eq-AlgorithmOutput}
 \mu = \sum_{ m = 1 }^{ M } \theta_{m} \mu_{m},
\end{equation}
where 
\begin{enumerate}
 \item $0 < \theta_{m} \le 1$, $\sum_{j=1}^{M}\theta_{m} = 1$.
 \item Each $\mu_{m}$ is a Borel probability measure satisfying the local uniformity condition of (\ref{eq-LocalUniformity}).
 \item Each measure $\mu_{m}$ has constant pointwise dimension $d_{m} = D_{\mu_{m}}( x )$ and constant density $\delta_{m} = \delta_{x}$ for all points $x$ in its support.
 \item The measures $\mu_{m}$ have disjoint supports.
\end{enumerate}

%Given (\ref{eq-AlgorithmOutput}), we cluster the data $\mathcal{D}$ according to the index $1 \le j \le M$ for which 
Given (\ref{eq-AlgorithmOutput}), we can partition the data $\mathcal{D}$ into $M$ clusters defined by the supports of the measures $\mu_{m}$. $\theta_{m}$ represents the proportion of the data in the cluster $K_{m}$ corresponding to $\mu_{m}$.
\\

The quality of an approximation of the type specified by (\ref{eq-AlgorithmOutput}) depends on the choice of $M$.
For example, if $M=1$, then any dimension estimate derived from the approximation reflects an aggregate of the local scaling behaviors of $\mu$ around the data points. On the other hand, if $M=k$ (the size of the data $\mathcal{D}$), then one obtains a clear picture of the individual scaling behaviors, but at great cost. Often it is not necessary to go to this latter extreme in order to estimate the distribution of pointwise dimensions of $\mu$ at the data points.
An important feature of our algorithm is that it chooses the clusters and the number of clusters adaptively, balancing concerns of efficiency with those of clarity.
We will describe this in greater detail in Section \ref{sec-Clustering}.
\\

% Clustering by itself, however, is not sufficient to produce for us the measures $\mu_{m}$.
%% Connecting sentence (MAYBE after the )
The bigger problem is actually producing the measures $\mu_{m}$.
In order to do this, we will use a Variational Bayesian method.
The idea is that, since each $\mu_{m}$ satisfies the local uniformity condition (\ref{eq-LocalUniformity}) with constant dimension and density, 
the $n^{\text{th}}$-nearest neighbor distribution for each cluster $K_{m}$ of the data should have 
a probability density function which is approximately $\phi_{n}$ from (\ref{eq-NearestNeighborDistribution})
 for the appropriate choice of parameters $d_{m}$ and $\delta_{m}$.
As a part of this procedure, we use the density functions $\phi_{n}$ to produce posterior distributions for the parameters $d_{m}$ and $\delta_{m}$ having made a choice of prior distributions.  
It is important to note that this forces us to treat the dimension $d_{m}$ (as well as the other parameters) implicitly as random variables. It is also worth noting that the choice of prior distributions for the relevant parameters can have a marked impact on the performance of the estimator when the data is sparse.
\\

The cluster parameters -- the number of clusters $M$, the cluster assignments $K_{m}$, and consequently the weights $\theta_{m}$ -- are estimated concurrently with the parameters $d_{m}$ and $\delta_{m}$.
In order to do this, 
%we quantify the likelihoods of each of the data points belonging to each of the clusters -- 
we make use of the latent cluster assignment variables
\[
 Z_{im} := 
\left\{ 
 \begin{array}{ cl }
  1, & \text{if}\ x_{i} \in K_{m}, \\
  0, & \text{else}
 \end{array}\right..
 %\chi_{K_{m}}( x_{i} )
\]

Most of the optimization in our algorithm is conditioned on the number of clusters $M$.
In these steps, the variables that we seek to estimate from the data are $d_{m}, \delta_{m}$, and $\theta_{m}$.
Our estimates for the variables $Z_{i,m}$ are derived from those. 
Thus we require prior distributions for $d_{m}, \delta_{m}$, and $\theta_{m}$.
We assume that the variables $d_{m}$ and $\delta_{m}$ follow independent gamma distribution with the parameter $(\alpha_{d}, \beta_{d})$ and $(\alpha_{\delta}, \beta_{\delta})$ respectively.
For each integer $ M > 0 $, we assume that the vector $( \theta_{1}, \theta_{2}, \hdots, \theta_{M})$ follows a Dirichlet distribution with the parameter $\left( \gamma_{1}^{(M)}, \gamma_{2}^{(M)}, \hdots, \gamma_{M}^{(M)} \right)$.
%%%%%%%%%%%%%%%%%%%%%%%%%%%%%%%%%%%%%%%%%%%%%%%%%
%%  BREAKING POINTS --- SHOHEI
%%%%%%%%%%%%%%%%%%%%%%%%%%%%%%%%%%%%%%%%%%%%%%%%%
Figure \ref{fig-GraphicalModel} provides a graphical model depicting the dependency between these variables.
\\

 \begin{figure}[htb, clip]
   \begin{center}
    \includegraphics[width=1\linewidth, clip]{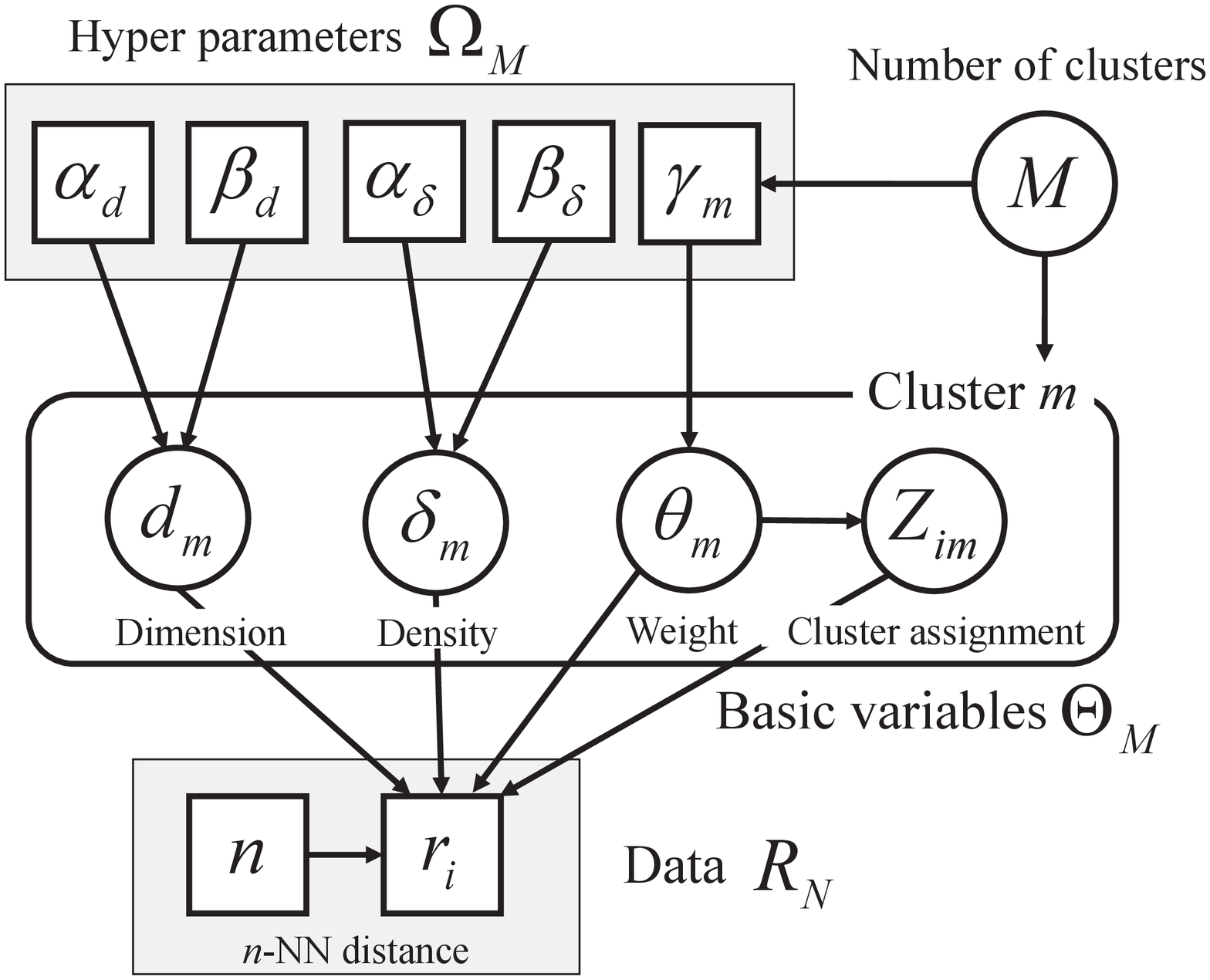}
    \caption{ Dependencies among the variables.
%The graphical model of infinite mixture of $n$-NN distributions. A distance of $n$-th nearest point to the point $i$, $r_{i}$, is drawn from a $n$-NN distribution of cluster $m$ ($m = 1, 2, \hdots, M$) when $i$ is a member of the cluster $m$. The parameters for each cluster, dimension $d_{m}$, density $\delta_{m}$, and cluster membership $X_{m}$, are drawn from each of prior distributions with the hyper parameter $\alpha$, $\beta$, and  $\gamma$. The variables in the squares are assumed given constant (data and the hyper parameters), and those in the circles are assumed random variables drawn from their posterior distributions. See also the main text.
\label{fig-GraphicalModel}}      
    \end{center}
  \end{figure}

%We denote a set of all the variables $\{ Z_{m}, d_{m}, \delta_{m}, \theta_{m} \}_{m=1}^{M}$ including the number of distinct measures $M$ to be estimated by $\Theta_{M}$.
%Suppose each of random variables $\Theta_{M}$ is drawn from a set of prior distribution $P( \Theta_{M} \mid \Omega )$ with hyper parameters which is introduced shortly.
%By $\Omega$, denote a set of the hyper parameters $\{ \alpha_{d}, \beta_{d}, \alpha_{\delta}, \beta_{\delta}, \gamma \}$ of the prior distribution.
%The prior probability $P\left( \Theta_{M} \mid \Omega \right)$ is $P\left( \{ \theta_{m} \}_{m=1}^{M} \mid \gamma \right) \prod_{m=1}^{M}
%P\left( d_{m} \mid \alpha_{d}, \beta_{d} \right) P\left( \delta_{m} \mid \alpha_{\delta}, \beta_{\delta} \right) $.
%% We justify our use of such prior in Section \ref{sec-Results}. 
%% COME BACK IN REMARKS????!!!!!!!!!!!!!!!!!
%\\

\subsubsection{Glossary \label{sec-Glossary}}
{\bf Data:} 
\begin{itemize}
 \item $\mathcal{D}$ --- a sequence $x_{1}, x_{2}, \hdots, x_{k}$ of $k$ data points in $\mathbb{R}^{N}$ sampled from the Borel probability measure $\mu$ for which we are trying to estimate the distribution of pointwise dimensions.
 \item $\mathcal{R}_{n}$ --- the sequence of distances $r_{1}, r_{2}, \hdots, r_{k}$ from each of the data points in $\mathcal{D}$ to its $n^{\text{th}}$-nearest neighbor.
\end{itemize}

{\bf Basic variables used in estimation:} 
 \begin{itemize}
 \item $M$ --- the number of clusters in the approximation (\ref{eq-AlgorithmOutput}).
 \item $\theta_{m}$  --- the weight of cluster $m$ in (\ref{eq-AlgorithmOutput}).
 \item $d_{m}$ --- the pointwise dimension of $\mu_{m}$ in (\ref{eq-AlgorithmOutput}).
 \item $\delta_{m}$  --- the density parameter in the Poisson process approximating $\mu_{m}$ according to the discussion in Section \ref{sec-LimitFree}.
 \item $Z_{im}$ --- the indicator that the $i^{\text{th}}$ data point in $\mathcal{D}$ belongs to cluster $m$.
 \end{itemize}

{\bf Hyper-parameters:} 
 \begin{itemize}
 \item $( \alpha_{d}, \beta_{d} )$ --- the parameters for the gamma prior of the dimension parameter $d_{m}$.
  \item $( \alpha_{\delta}, \beta_{\delta} )$ --- the parameters for the gamma prior of the density parameter $\delta_{m}$.
  \item $\gamma^{(M)} := ( \gamma_{1}^{(M)}, \hdots, \gamma_{M}^{(M)} )$ --- the parameters for the Dirichlet prior of the weight parameters $\theta^{(M)} := (\theta_{1}, \hdots, \theta_{M})$.
 \end{itemize}

{\bf Some short-hand:}
 \begin{itemize}
 \item $\Theta_{M} = \left( ( Z_{im} ), ( d_{m} ), ( \delta_{m} ), ( \theta_{m} ) \right)$ --- a list of the basic variables apart from $M$.
 \item $\Theta = \left( M, \Theta_{M} \right)$ --- the list of all the basic variables.
 \item $\Omega_{M} = \left( ( \alpha_{d}, \beta_{d} ), ( \alpha_{\delta}, \beta_{\delta} ), ( \gamma_{1}^{(M)}, \hdots, \gamma_{M}^{(M)} )\right)$ --- the list of hyper-parameters for a fixed number $M$ of clusters.
 \item $\Omega = \left( M, \Omega_{M} \right)$ --- here, the number of clusters is allowed to vary.
\end{itemize}

\subsubsection{The Objective}
The goal of the procedure described above is to compute the posterior distribution
for the list of variables $\Theta$ given $\Omega$ and $\mathcal{R}_{N}$:
\begin{equation}
 P\left( \Theta \mid \Omega, \mathcal{R}_{N} \right) = 
\frac{P\left( \mathcal{R}_{N} \mid \Theta \right) P\left( \Theta \mid \Omega \right)}
{ P\left( \mathcal{R}^{N} \mid \Omega \right)}.
 \label{eq-PosteriorDistribution}
\end{equation}

\subsection{The Variational Bayesian Algorithm \label{sec-VBApproximation}}
\subsubsection{Phases \label{sec-Phases}}
%The posterior distribution (\ref{eq-PosteriorDistribution}) is intractable to compute as in a naive manner, we employ the variational Bayesian method introduced in this section.
As we have a clean description (\ref{eq-NearestNeighborDistribution}) of the relationship between the basic variables, we can feasibly use the variational Bayesian method presented by Jordan et al. \cite{Jordan1997,Attias1999,Beal2003,Ghahramani1999} to estimate the posterior distribution (\ref{eq-PosteriorDistribution}).
There are two distinct phases in our application of this method: 
\begin{itemize}
 \item {\bf The fixed phase} --- here, we estimate the posterior distribution conditioned upon the number of clusters $M$.
 \item {\bf The clustering phase} --- here, we use the techniques of Ueda et al. \cite{Ueda2000,Ueda2002} to find the number of clusters $M$ for which the estimate from the corresponding fixed phase best approximates the posterior of (\ref{eq-PosteriorDistribution}).
\end{itemize}
Although we treat these phases separately, there is considerable between them as we iteratively update our estimate of the posterior (\ref{eq-PosteriorDistribution}).
\\

In the following section, we describe the connection between the two phases of our procedure -- the objective function that our estimator seeks to maximize.

\subsubsection{The Objective Function \label{sec-ObjectiveFunction}}
At each step, and for each component $X$ of $\Theta$, we distinguish between the true marginal posterior distribution $P( X | \Omega, \mathcal{R}_{N} )$ derived from (\ref{eq-PosteriorDistribution})
 and our estimates of this distribution $Q( X | \Omega )$, which we also denote $Q( X )$ for convenience as we are not varying the hyper-parameters $\Omega$.
Finally, this is crucial, we assume that these marginal distributions of $Q$ are independent for each fixed number of clusters $M$: 
\begin{equation}
 \label{eq-VariationalPosteriorIndependence}
  Q( \Theta_{M} ) = \prod_{i=1}^{k} Q( \theta^{(M)} ) \prod_{m=1}^{M}
Q( Z_{im} )  Q( d_{m} ) Q( \delta_{m} ). %Q( \theta_{m} ).
\end{equation}

We can estimate the quality of the approximation $Q$ by considering the likelihood $P( \mathcal{R}_{N} \mid \Omega )$ which measures how well the data is determined by the hyper-parameters $\Omega$. 
To begin with, 
\[
  \log P( \mathcal{R}_{N} | \Omega_{M} ) = \log \int P( \mathcal{R}_{N}, \Theta_{M} \mid \Omega_{M} ) \D \Theta_{M}.
\]
Let us write 
\begin{equation}
 \label{eq-ObjectiveFunction}
  L_{M}( Q ) = \int Q\left( \Theta_{M} \right) \log \frac{ P\left( \mathcal{R}_{N}, \Theta_{M} \mid \Omega_{M} \right) }{ Q\left(  \Theta_{M} \right) } 
 \D \Theta,
\end{equation}
and let us denote by $\kappa_{M}( Q )$ the Kullback-Leibler divergence 
\begin{equation}
 \label{eq-KLDivergenceM}
 \kappa_{M}( Q ) := KL\left( Q( \Theta_{M} ) \| P( \Theta_{M}, \mathcal{R}_{N} \mid \Omega_{M} ) \right).
\end{equation}
We have the decomposition
\begin{equation}
 \label{eq-KLError}
 \log P( \mathcal{R}_{N} | \Omega_{M} ) = 
L_{M}( Q ) + \kappa_{M}( Q ).
\end{equation}
Our goal, in the fixed phase of the procedure, is to find the distribution $Q$ which maximizes the objective function $L_{M}( Q )$. This is equivalent from the above decomposition to minimizing the Kullback-Leibler divergence $\kappa_{M}( Q )$, which measures the error in our approximation.
\\

In our implementation, we use the explicit description of the objective functions $L_{M}( Q )$ to perform the optimization. 
We will describe the details for each of the phases separately once we have presented explicit descriptions of the relevant prior distributions, which the following section is devoted to.

%%%%%%%%%%%%%%%%%%%%%%%%%%%%%%%%%%%%%%%%%%%%%%%%%%%%%%%%%%%%%%%%%%%%%%%%%%%%%%%%%%%%%%
%\subsubsection{Maximization: posterior distribution}
%%%%%%%%%%%%%%%%%%%%%%%%%%%%%%%%%%%%%%%%%%%%%%%%%%%%%%%%%%%%%%%%%%%%%%%%%%%%%%%%%%%%%%

\subsubsection{The Prior Distributions}
%\subsubsection{Definitions of Distributions}
%\subsection{Definition and Updating Rules \label{sec-UpdatingRules}}
%Here we define each of the related distributions specifically.

In this section, we present expressions for the prior distributions  $P( X \mid \Omega_{M})$ which we will make use of in our implementation. We also describe the distribution of the data conditioned on the basic variables $P( \mathcal{R}_{N} \mid \Theta_{M} )$ which we use in conjunction with the priors to derive the posterior distribution $P( \Theta_{M} \mid \Omega_{M}, \mathcal{R}_{N} )$.
\\

We begin by describing our assumed prior distributions for the basic parameters given a fixed number of clusters $M$. 
We stated our assumptions about these parameters at the end of Section \ref{sec-AlgorithmOverview}, but we provide here explicit expressions of their probability densities: 
\begin{enumerate}
 \item {\bf The density parameters ---} The density parameter $\delta_{m}$ is assumed to follow a gamma distribution with the shape parameter $\alpha_{\delta} > 0$ and the rate parameter $\beta_{\delta} > 0$:
\begin{equation}
 \label{eq-PriorDensity}
 P( \delta_{m} | \alpha_{\delta}, \beta_{\delta} ) 
= 
\frac{ \beta_{\delta}^{\alpha_{\delta} } \delta_{m}^{ \alpha_{\delta} - 1 } }{ \Gamma( \alpha_{\delta} ) }
\exp\left( - \beta_{\delta} \delta_{m} \right).
\end{equation}
%%%
 \item {\bf The dimension parameters ---} The dimension parameter $d_{m}$ is assumed to follow a gamma distribution $\alpha_{d} > 0$ and $\beta_{d} > 0$:
\begin{equation}
 \label{eq-PriorDimension}
 P( d_{m} | \alpha_{ d }, \beta_{ d } ) 
= 
\frac{ \beta_{ d }^{\alpha_{ d } }  d_{m}^{ \alpha_{ d } - 1 } }{ \Gamma( \alpha_{ d } ) }
\exp\left( - \beta_{ d }  d_{m} \right).
\end{equation}
%%%
 \item {\bf The weight parameters ---} The weight parameter $\theta := \theta^{(M)}$ follows a Dirichlet distribution with the exponents $\gamma := \gamma^{(M)}$:
\begin{equation}
 \label{eq-PriorWeight}
 P\left( \theta | \gamma \right) 
= 
\frac{ \Gamma\left( \sum_{ m = 1 }^{M} \gamma_{ m } \right) }{ \prod_{ m = 1 }^{M}\Gamma( \gamma_{ m } ) }
\prod_{m=1}^{M} \theta_{m}^{\gamma_{m} - 1}.
\end{equation}
 \item {\bf The cluster indicators ---} The cluster indicators $Z_{i} := ( Z_{i1}, Z_{i2}, \hdots, Z_{iM} )$ follow a multinomial distribution with the probabilities $\theta$: 
\begin{equation}
 P( Z_{i} \mid \theta ) = \prod_{m=1}^{M}\theta_{m}^{Z_{im}}
\end{equation}
\end{enumerate}

Finally, the conditional probability of drawing the samples $\mathcal{R}_{N}$ given a set of basic variables $\Theta_{M}$ is
\begin{equation}
  P( \mathcal{R}_{N} \mid \Theta_{M}) = \prod_{i = 1}^{N}\prod_{m=1}^{M}\left\{ \theta_{m}^{(M)}P( r_{i} \mid n, \delta_{m}, d_{m} ) \right\}^{Z_{im}}
\label{eq-LikelihoodAppendix}
\end{equation}
where 
\begin{equation}
 P( r_{i} \mid n, \delta_{m}, d_{m} ) = \frac{ d_{m} \delta_{m}^{n} r_{i}^{ n d_{m} - 1 } }{ \Gamma(n) }\exp\left( - \delta r_{i}^{d_{m}} \right).
\label{eq-FormalNNDistributionClusters}
\end{equation} 
Note that (\ref{eq-LikelihoodAppendix}) and (\ref{eq-FormalNNDistributionClusters}) are derived from (\ref{eq-NearestNeighborDistribution}), accounting for differences in the clusters.
\\

The above expressions allow us to explicitly describe the objective function (\ref{eq-ObjectiveFunction}). 
%We will develop this description in the following section.
This description is provided in (\ref{eq-KLDivergenceDetail}), and is developed in the following section.
\begin{figure*}
\begin{eqnarray}
\label{eq-KLDivergenceDetail}
L_{M}( Q ) 
&=& \sum_{i}^{k}\sum_{m=1}^{M} \expf{ t }{ Z_{im} }
\left\{
-\log\left( \expf{ t }{ Z_{im} } \right) + \expf{t}{ \log\left( \theta_{m} \right) }
+ \expf{ t }{ \log( d_{m} ) } 
+ n \expf{ t }{ d_{m} } \log( r_{i} ) + n \expf{ t }{ \log( \delta_{m} ) }
-\expf{ t }{ \delta_{m} } \expf{ t }{ r_{i}^{ d_{m} } }
\right\}
\nonumber
\\
&-& k \log \Gamma( n ) - \sum_{m=1}^{M} \log \Gamma\left( \gamma_{m} \right) + \log \Gamma\left( \sum_{m=1}^{M}\gamma_{m} \right) 
%%%%%%%%%%%%%%%%%%%%%%%
+ M \log\left( \frac{ \beta_{\delta}^{\alpha_{\delta}} }{ \Gamma( \alpha_{\delta} ) } \right)
+ M \log\left( \frac{ \beta_{d}^{\alpha_{d}} }{ \Gamma( \alpha_{d} ) } \right)
\\
&+& \sum_{m=1}^{M}\log \Gamma\left( \hat{\gamma}_{m}^{(t)} \right)
- \log \left( \sum_{m=1}^{M} \Gamma\left( \hat{\gamma}_{m}^{(t)} \right) \right)
- 
 \sum_{m=1}^{M}
\left\{
\hat{\alpha}_{m}^{(t)} \log\left( \hat{ \beta }_{m}^{(t)} \right)
- \log\left( \Gamma\left( \hat{ \alpha }_{m}^{(t)} \right) \right) - \hat{ \beta }_{m}^{(t)}\expf{ t }{ \delta_{m} }
\right\} 
 + \sum_{m=1}^{M} \hat{D}^{(t)}_{m}
\nonumber
\end{eqnarray}
\end{figure*}
%%%
%
%
%

\subsubsection{The Fixed Phase \label{sec-FixedPhase}}
Recall that in the fixed phase of our algorithm we fix a number of clusters $M$ and attempt to find an approximate posterior distribution $Q( X )$ for each component $X$ of $\Theta_{M}$.
This approximation is obtained by iteratively maximizing the objective function  $L_{M}(Q)$.
In this section, we describe what happens in a single iteration of this optimization procedure.
We denote by $Q_{t}$ the approximation obtained at the $t^{\text{th}}$ step, and 
we also introduce the notation $\Theta_{M, \neg X}$ for those components of $\Theta_{M}$ which are distinct from $X$.
\\

Each step of the optimization is further divided into four portions. These portions correspond to the four classes of basic parameters distinct from the number of clusters $M$ -- the density parameters $\delta_{m}$, the dimension parameters $d_{m}$, the weight parameters $\theta = \theta^{(M)}$, and the cluster indicators $Z_{i} = ( Z_{i1}, Z_{i2}, \hdots, Z_{iM} )$.
In each portion, we modify the values of the relevant parameters conditioned upon those of the other classes of parameters.
The goal at each step is to increase the value of the objective function $L_{M}(Q)$, so that we always have 
\[
 L_{M}( Q_{t} ) \le L_{M}( Q_{t+1} ).
\]
%{\bf CHECK IF THIS IS TRUE}:
%This has been proved to coverge to a local maximum by Attias \cite{Attias1999} and Ghahramani and Beal \cite{Ghahramani1999}.
The convergence of such methods has been studied extensively by Attias \cite{Attias1999}, and by Ghahramani and Beal \cite{Ghahramani1999}, although it has not been rigorously proven.
\\

%In order to present the rules by which the parameters are updated, we must introduce some notation.
Stated plainly, our goal is to approximate the true distributions of the basic variables $\delta_{m}$, $d_{m}$, $\theta$, and $Z_{i}$ from within a constrained family of posterior distributions.
In our approximation, we will assume that the densities $\delta_{m}$ follow a gamma distribution, that the weights $\theta$ follow a Dirichlet distribution, and that the cluster indicators $Z_{i}$ follow a multinomial distribution.
The posteriors for the dimensions will be derived from our estimated posteriors of these variables.
Thus, at step $t$, the situation is as follows: 
\begin{enumerate}
 \item The estimated posterior distribution $Q_{t}( \delta_{m} )$ is a gamma distribution with the estimated parameters $\hat{\alpha}_{m}^{(t)}$ and $\hat{\beta}_{m}^{(t)}$.
 \item The estimated posterior distribution $Q_{t}( \theta )$ is a Dirichlet distribution with the estimated parameters $\hat{\gamma}^{(t)} = \left( \hat{\gamma}_{1}^{(t)}, \hat{\gamma}_{2}^{(t)}, \hdots, \hat{\gamma}_{M}^{(t)}\right)$.
 \item The estimated posterior distribution $Q_{t}( Z_{i} )$ is a multinomial distribution with the estimated parameters $\hat{\pi}_{i}^{(t)} := \left( \hat{\pi}_{i1}^{(t)}, \hat{\pi}_{i2}^{(t)}, \hdots, \hat{\pi}_{iM}^{(t)}\right)$.
 \item The estimated posterior distribution $Q_{t}( d_{m} )$ is derived from the above posterior distributions in a manner that we shall describe below. 
This distribution has the parameters $\hat{A}_{m}^{(t)}$, $\hat{B}_{m}^{(t)}$, and $\hat{C}_{im}^{(t)}$ for $1 \le i \le N$, and its probability density function is 
\begin{equation}
\label{eq-PosteriorDimension}
 Q_{t}( d_{m} ) = \frac{ d_{m}^{ \hat{A}_{m}^{(t)} } }{ \hat{D}_{m}^{(t)} } \exp\left( 
 d_{m} \hat{B}_{m}^{(t)} - \sum_{i=1}^{N} \hat{C}_{im}^{(t)} r_{i}^{d_{m}}
\right) 
\end{equation}
where the denominator $\hat{D}_{m}^{(t)}$ normalizes the integral of (\ref{eq-PosteriorDimension}).
%\begin{equation}
%\label{eq-DenominatorD}
%\hat{D}_{m}^{(t)} = 
%\int_{d} \exp\left( A_{m}\log d_{m} + d_{m}B_{m} - \sum_{i=1}^{N}C_{im}r_{i}^{d} \right) \D d_{m}. 
%\end{equation}
%The denominator is numerically integrated and used for the other expectation related to $d_{m}$ in 
%(\ref{eq-KLDivergenceDetail}, \ref{eq-AveD}, \ref{eq-AveRD}, \ref{eq-AveLogD}), 
\end{enumerate}

We will need another bit of notation before we can describe the rules by which we update the posterior distributions.
For a component $X$ of $\Theta_{M}$ and a function $f$  of $X$, we write
\[
 \expf{t}{f(X)} := \expectation{Q_{t}(X) }{ f( X ) },
\]
the expectation of $f( X )$ with respect to the variational posterior distribution $Q_{t}( X )$.
\\

We now present the update rules. 
These rules are derived from the procedure presented in Jordan et al. \cite{Jordan1997}, and we refer the reader to that paper for details.
According to \cite{Jordan1997}, 
for each component $X$ of $\Theta_{M*}$, the updating rule for the variational posterior is written with (\ref{eq-LikelihoodAppendix}):
\begin{equation}
 \label{eq-Maximization}
 Q_{t+1}( X ) \propto P( X | \Omega_{M} ) e^{ \expectation{Q_{t}( \Theta_{M, \neg X } )}{ \log\left( P( \mathcal{R}_{n} | \Theta_{M} )  \right) } }
\end{equation}
Pseudo-code corresponding to each rule may be found in Section \ref{sec-PseudoCode}.
In parentheses, we name each updating rule for use in our pseudo-code in Section \ref{sec-PseudoCode}.
%We indicate the name of the relevant code block as we present each rule.
\\

\noindent \textbf{Update rules for $\hat{\alpha}_{m}^{(t)}$ and $\hat{\beta}_{m}^{(t)}$ }\\ 
(\textbf{PosteriorDensity } in Section \ref{sec-PseudoCode})
\[
 \hat{\alpha}_{m}^{(t+1)} := \alpha_{\delta} + n \sum_{i=1}^{k}\expf{t}{ Z_{im} },
\]
and
\[
 \hat{\beta}_{m}^{(t+1)} := \beta_{\delta} + \sum_{i=1}^{k} \expf{t}{ Z_{im} }\expf{ t }{ r_{i}^{d_{m}} }.
\]

\noindent \textbf{Update rules for $\hat{\gamma}^{(t)}$ }\\ 
(\textbf{PosteriorWeight } in Section \ref{sec-PseudoCode})
\begin{equation}
 \hat{\gamma}_{m}^{(t+1)} := \gamma_{m}^{(M)} +  \sum_{i=1}^{k}\expf{t}{ Z_{im} }.
\label{eq-PosteriorFrequency}
\end{equation}

\noindent \textbf{Update rules for $\hat{\pi}^{(t)}$ }\\ 
(\textbf{PosteriorCluster } in Section \ref{sec-PseudoCode})
\begin{eqnarray}
\label{eq-PosteriorCluster}
\log\left( \hat{\pi}_{m}^{(t+1)} \right) &\propto&
\expf{ t }{ \log\left( \theta_{m} \right) } 
+
\expf{ t }{ \log\left( d_{m} \right) } 
\\
&+& 
n\expf{ t }{ d_{m} } \log( r_{i} )
+
n\expf{ t }{ \log\left( \delta_{m} \right) }
\\
&+&
\expf{ t }{ \delta_{m} } \expf{ t }{ \log\left( d_{m} \right) } .
\end{eqnarray}
Here, we have
\[
 \expf{t}{ \delta_{m} } = \frac{ \hat{ \alpha }_{m}^{(t)} }{ \hat{ \beta }_{m}^{(t)} },
\]
\[
 \expf{ t }{ \log\left( \theta_{m} \right) }  
 = \Psi\left( \hat{\gamma}_{m}^{(t)} \right) - \Psi\left( \sum_{m=1}^{M}\hat{\gamma}_{m}^{(t)} \right), 
\]
and 
\[
 \expf{ t }{ \log\left( \delta_{m} \right) }  = 
 \Psi\left( \hat{ \alpha }_{m}^{(t)} \right) - \log \hat{ \beta }_{m}^{(t)},
\]
where $\Psi( x ) = \frac{ \D  }{ \D x } \log( \Gamma( x ) ) $ is the digamma function.
The other expectations do not have such simple expressions and are estimated numerically.
\\

\noindent \textbf{Update rules for $\hat{A}_{m}^{(t)}$, $\hat{B}_{m}^{(t)}$, and $\hat{C}_{im}^{(t)}$ }\\ 
(\textbf{PosteriorDimension } in Section \ref{sec-PseudoCode})
%\\
%
%\subsubsection{ Updating rule of {\bf $d_{m}$ (PosteriorDimension)} } 
Updating rule of the dimension parameter $d_{m}$ ({\bf (PosteriorDimension)} in Section \ref{sec-PseudoCode}).
\[
 \hat{A}_{m}^{(t+1)} = \sum_{i=1}^{k} \expf{t}{ Z_{im} } + \alpha_{d} - 1,
\]
\[
 \hat{B}_{m}^{ (t+1) } = n  \sum_{i=1}^{N} \expf{t}{ Z_{im} } \log r_{i} + \beta_{d},
\]
and
\[
  \hat{C}_{im}^{(t+1)} = \expf{t}{\delta_{m}} \expf{t}{ Z_{im} }.
\]
The denominator $\hat{D}_{m}^{(t+1)}$ of (\ref{eq-PosteriorDimension}) is numerically integrated.
This denominator plays an important role in the evaluation of the objective function $L_{M}(Q)$ as seen in (\ref{eq-KLDivergenceDetail}).
Its numerical estimation is a bottleneck in our algorithm.
\\

This concludes our discussion of the fixed phase of our algorithm. We now discuss the choice of the number of clusters $M$.

\subsubsection{The Clustering Phase \label{sec-Clustering}}

In this section, we discuss how we adaptively choose the number of clusters $M$ in the approximation (\ref{eq-AlgorithmOutput}) using the split-and-merge technique of Ueda et al. \cite{Ueda2000,Ueda2002}.
We describe here the essence of this method and refer the reader to those papers for details.
\\

For a given approximation of the form (\ref{eq-AlgorithmOutput}), 
there are many ways in which one could potentially split or merge clusters.
In the current implementation of our algorithm, we have essentially chosen to attempt splits and merges at random.
There is a quite bit of scope for improvement in this department.
However, for the rest of this section, let us assume that we have either chosen a cluster to split or chosen two clusters to merge together without concerning ourselves with how the choice was made.
Thus, from the original list of $M$ clusters, we are considering whether or not to use $M^{*}$ clusters where $M^{*}$ is either $M-1$ or $M+1$. 
\\

In either case, we begin by adjusting the hyper-parameter $\gamma^{(M)}$.
In the case of a proposed split, $\gamma^{(M^{*})}$ is derived from $\gamma^{(M)}$ by distributing the component corresponding to the cluster to be split evenly among the two new clusters. 
In the case of a proposed merge, the component corresponding to the cluster to be merged are simply added together to produce that of the new cluster.
\\

The next step is to assign values to the parameters for the variational posteriors of the densities and dimensions which we described in the previous section.
For most of the clusters, these parameters will remain unchanged in the new model.
In the case where we are considering a split, one of the new clusters will have the same parameter values as the original cluster whereas the second new cluster will have parameter values which are perturbations of these values.
In the case where we are considering a merge, 
the parameter values for the new cluster is the average of those for the original clusters.
Once these values have been assigned, we use the updating rules presented in the previous section to derive variational posterior distributions $Q^{*}$ for the basic variables conditioned on the number of clusters being $M^{*}$.
This allows us to compare the values of the objective functions $L_{M}( Q )$ and $L_{M^{*}}( Q^{*} )$ using (\ref{eq-ObjectiveFunction}).
We accept the proposed split or merge if
\[
  L_{M^{*}}( Q^{*} ) > L_{M}( Q ).
\]

Further technical details of our implementation are clear from the pseudo-code in Section \ref{sec-PseudoCode}, where the clustering phase is presented in the form of the functions {\bf Split} and {\bf Merge}.

\subsection{Implementation \label{sec-PseudoCode}}
%\subsubsection{Initialization and Halting Criteria}
%\subsubsection{Pseudo-code}
The first problem we address in our implementation is that of initializing the basic variables.
We begin by assuming that there is only one cluster -- $M=1$.
Under this assumption, we use the maximum likelihood estimates of the basic variables corresponding to the conditional distribution $P(\mathcal{R}_{n} | \Theta_{1} )$ derived in (\ref{eq-LikelihoodAppendix}).
The basic variables can also be initialized under the assumption that the initial number of clusters is some fixed number larger than 1. In this situation, we use 
the EM algorithm of Dempster et al. \cite{Dempster1977}.
\\

The next issue is of deciding when the optimization procedure should halt.
Our implementation takes an argument $t_{\max}$, which specifies the maximal number of iterations to perform. 
It also takes an argument $\Delta L$, which specifies the minimum acceptable improvement in the value of the objective function.
If we denote by $L_{t}$ the value of the objective function at step $t$, the program halts if  $t = t_{\max}$ or $L_{t} - L_{t-1} < \Delta L$.
\\

The pseudo-code below details our implementation under these conditions.
The {\bf Main} function takes the data, the number $n$ of nearest neighbors to be used, and the initial number of clusters as arguments.
It calls each of the functions {\bf Update}, {\bf Split}, and {\bf Merge} iteratively, checking for the halting criteria at each step.
The function {\bf Update} implements the fixed phase of our algorithm, and the clustering phase is handled by {\bf Split} and {\bf Merge}.

The {\bf Objective} function evaluates the objective function $L_{M}( Q )$ for a given list of basic variables.
The functions whose names begin with `{\bf Posterior}' are implementation of the update rules which may be found in Section \ref{sec-FixedPhase}.
The {\bf LogNormal} function returns a sample from the log-normal distribution with the specified mean and variance parameters.
The function $\mathbf{Random}( \{ 1, \hdots, M \}, k ) $ randomly chooses a subset of $\{1, 2, \hdots, M\}$ of size $k$.

% Ueda, N., Nakano, R., Ghahramani, Z., & Hinton, G. E. (2000). SMEM
% algorithm for mixture models. Neural computation, 12(9), 2109-2128.
% http://www.mitpressjournals.org/doi/abs/10.1162/089976600300015088

$\ $

\begin{algorithmic}[1] 
\STATE $\mathbf{Main}( \mathcal{D}, n, M, \Delta L, t_{\text{max}} )$
\STATE $\mathcal{R}_{n} \leftarrow \mathbf{NearestNeighborDistances}( \mathcal{D}, n )$
\STATE $\{ d_{m}, \delta_{m}, \theta_{m}, Z_{im} \} \leftarrow \mathbf{Initialization}( \mathcal{R}_{n}, M )$
\STATE $L_{0} \leftarrow \inf$
\STATE $L_{1} \leftarrow 0$
\STATE $\Theta_{M} \leftarrow ( \{ { d }_{m}, { \delta }_{m}, { \theta }_{m}, { Z }_{m} \}_{m=1}^{M} )$
\WHILE { $| L_{t} - L_{t+1} | > \Delta L$ \AND $t \le t_{\text{max}}$ }
 \STATE $t \leftarrow t + 1$
 \STATE $( L_{t}, \Theta_{M} ) \leftarrow \mathbf{Update}( \Theta_{M} )$
 \STATE $( \bar{ L }_{t}, \bar{ \Omega }_{M-1} ) \leftarrow \mathbf{Merge}( \Theta_{M} \}_{m=1}^{M} )$
 \IF{ $\bar{ L }_{t} > L_{t}$ }
 \STATE $L_{t} \leftarrow \bar{L}_{t}$, $\Theta_{M-1} \leftarrow \bar{ \Omega }_{M-1}$, $M \leftarrow M - 1 $
 \ENDIF
 \STATE $( \bar{L}_{t}, \bar{\Omega}_{M+1} ) \leftarrow \mathbf{Split}( \Theta_{M} )$
 \IF{ $\bar{ L }_{t} > L_{t}$ }
 \STATE $L_{t} \leftarrow \bar{L}_{t}$, $\Theta_{M+1} \leftarrow \bar{ \Omega }_{M+1}$, $M \leftarrow M + 1 $
 \ENDIF
\ENDWHILE
\RETURN $( \Theta_{M} )$
\end{algorithmic}

$\ $

\begin{algorithmic}[1]
\STATE $\mathbf{Update}( \Theta_{M} = \{ d_{m}, \delta_{m}, \theta_{m}, Z_{im} \}_{m = 1}^{M} )$
\STATE Initialize: $t = 0$, $L_{0} \leftarrow -\infty$, $L_{1} \leftarrow 0$
\WHILE {$| L_{t} - L_{t+1} | > \Delta L$}
 \STATE $t \leftarrow t + 1$
 \STATE $\{d_{m}\} \leftarrow \mathbf{PosteriorDimension}( \{\delta_{m}, \theta_{m}, Z_{im}\}_{m=1}^{M} )$
 \STATE $\{\delta_{m}\} \leftarrow \mathbf{PosteriorDensity}( \{d_{m}, \theta_{m}, Z_{im}\}_{m=1}^{M} )$
 \STATE $\{\theta_{m}\} \leftarrow \mathbf{PosteriorWeight}( \{d_{m}, \delta_{m}, Z_{im}\}_{m=1}^{M} )$
 \STATE $\{Z_{im}\} \leftarrow \mathbf{PosteriorCluster}( \{d_{m}, \delta_{m}, \theta_{m}\}_{m=1}^{M} )$
 \STATE $L_{t} \leftarrow \mathbf{Objective}( \{d_{m}, \delta_{m}, \theta_{m}, Z_{im} \}_{m=1}^{M} )$
\ENDWHILE
\STATE $\Theta_{M} \leftarrow \{ d_{m}, \delta_{m}, \theta_{m}, Z_{im} \}_{m = 1}^{M}$
\RETURN $( L_{t}, \Theta_{M} )$
\end{algorithmic}

$\ $ 

\begin{algorithmic}[1]
\STATE $\mathbf{Split}( \Theta_{M} = \{d_{m}, \delta_{m}, \theta_{m}, Z_{im}\}_{m=1}^{M} )$
%\STATE $L_{0} \leftarrow -\infty$,
\STATE $K \leftarrow \mathbf{Random}( \{ 1, \hdots, M \}, 1 ) $
\STATE $\Theta_{K} \leftarrow \{ d_{m}, \delta_{m}, \theta_{m}, Z_{im} \}_{m \in K}$
\STATE $(L, \Theta_{2}) \leftarrow \mathbf{Update}( \mathbf{InitializeSplit}( \Theta_{K} ) )$
\STATE $\Theta_{M+1} \leftarrow \{ \Theta_{2}, \Theta_{M} \setminus \Theta_{K} \}$
\RETURN $(L, \Theta_{M+1})$
\end{algorithmic}

$\ $

\begin{algorithmic}[1]
\STATE $\mathbf{Merge}( \Theta_{M} = \{d_{m}, \delta_{m}, \theta_{m}, Z_{im}\}_{m=1}^{M} )$
%\STATE $L_{0} \leftarrow -\infty$,
\IF {$M = 1$}
\RETURN $( -\infty, \Theta_{M} )$
\ENDIF
\STATE $K \leftarrow \mathbf{Random}( \{ 1, \hdots, M \}, 2 ) $
\STATE $\Theta_{K} \leftarrow \{ d_{m}, \delta_{m}, \theta_{m}, Z_{im} \}_{m \in K}$
\STATE $(L, \Theta_{1}) \leftarrow \mathbf{Update}( \mathbf{InitializeMerge}( \Theta_{K} ) )$
\STATE $\Theta_{M-1} \leftarrow \{ \Theta_{1}, \Theta_{M} \setminus \Theta_{K} \}$
\RETURN $(L, \Theta_{M-1})$
\end{algorithmic}

$\ $

\begin{algorithmic}[1]
\STATE $\mathbf{NearestNeighborDistances}( \mathcal{D}, n )$
 \STATE Use the KD-tree algorithm to compute the list of $n^{\text{th}}$-nearest neighbor distances $\mathcal{R}_{n}$ of the data points in $\mathcal{D}$.
\RETURN $\mathcal{R}_{n}$
\end{algorithmic}

$\ $

\begin{algorithmic}[1]
\STATE $\mathbf{Initialization}( \mathcal{R}_{n}, M )$
 \STATE Use the EM algorithm to estimate the initial parameters $\{ \bar{ d }_{m}, \bar{ \delta }_{m}, \bar{ \theta }_{m}, \bar{ Z }_{im} \}$ which maximize the likelihood $P( \mathcal{R}_{n} | \Theta_{M} )$.
\RETURN $\{ \bar{ d }_{m}, \bar{ \delta }_{m}, \bar{ \theta }_{m}, \bar{ Z }_{im} \}$
\end{algorithmic}

$\ $

\begin{algorithmic}[1]
\STATE $\mathbf{InitializeSplit}( \Theta_{0} = \{d_{0}, \delta_{0}, \theta_{0}, Z_{i0}\} )$
\FOR{ $m = 1, 2$}
\STATE $d_{m} \leftarrow d_{0} \times \mathbf{LogNormal}(0, \sigma)$
\STATE $\delta_{m} \leftarrow \delta_{0} \times \mathbf{LogNormal}(0, \sigma)$
\STATE $\theta_{m} \leftarrow \theta_{0} \times \mathbf{LogNormal}(0, \sigma)$
\STATE $Z_{im} \leftarrow \mathbf{PosteriorCluster}( \{d_{m}, \delta_{m}, \theta_{m}\} )$
\ENDFOR
\STATE $\Theta_{2} = \{d_{m}, \delta_{m}, \theta_{m}, Z_{im}\}_{m=1}^{2}$
\RETURN $(\Theta_{2})$
\end{algorithmic}

$\ $

\begin{algorithmic}[1]
\STATE $\mathbf{InitializeMerge}( \Theta_{M} = \{d_{m}, \delta_{m}, \theta_{m}, Z_{im}\}_{m=1}^{M} )$
\STATE $d_{0} \leftarrow \frac{1}{M}\sum_{m=1}^{M}d_{m}$, $\delta_{0} \leftarrow \frac{1}{M}\sum_{m=1}^{M}\delta_{m}$
\STATE $\theta_{0} \leftarrow \sum_{m=1}^{M}\theta_{m}$, $Z_{i0} \leftarrow \sum_{m=1}^{M}Z_{im}$
\STATE $\Theta_{0} = \{d_{0}, \delta_{0}, \theta_{0}, Z_{i0}\}$
\RETURN $(\Theta_{0})$
\end{algorithmic}

%%%%%%%%%%%%%%%%%%%%%%%%%%%%%%%%%%%%%%%%%%%%%%%%%%%
%% List of the contents
%%%%%%%%%%%%%%%%%%%%%%%%%%%%%%%%%%%%%%%%%%%%%%%%%%%
%% Basic Analysis (the Cantor set)
%%% Camtpr product measures  -- we have five sets
%%% ``Fractal'' measures   --  1  (moved to concluding remark)
%%% Noisy Cantor data ---  (moved 4.0)
%%% Piecewise linear transformation of Cantor set

%% Applications 
%%% 1. Dynamical systems: Henon, Ikeda, and Lorenz
%%% 2. Exactness of generating measure and related things
%%% 3. Embedding by different dimensions and delays -- Henon map and Lorenz 
%%% 4. Stochastic effects on data
%%%   4.0. Noisy Cantor data --- 
%%%%  4.1. Noisy Henon and Lorenz 
%%%%  4.2. Fractional Brownian motion (pure stochastic process)
%%%   4.3. Noise detection/ decomposition (future question)

\section{Basic Analysis \label{sec-BasicAnalysis}}

\subsection{Overview \label{sec-BasicAnalysisOverview}}

In this section, we test our estimator on data sampled from certain probability measures related to the Cantor measure.
Recall that we performed a similar analysis of GP estimators in Section \ref{sec-GPAnalysis}, and that we identified two serious issues as a result: 
\begin{enumerate}
 \item The difficulty in choosing informative scales.
 \item High sensitivity to transformations of the data under which dimension is invariant.
\end{enumerate}

There is one further thing that we observed -- 
not only does the difficulty in choosing informative scales reflect the dimension blindness of correlation dimension, but we also observed this same problem for data sampled from the Cantor measure which has uniform pointwise dimension.
\\

In this section, we will analyze the behavior of our proposed estimator when confronted with various data sets that exhibit behavior which was problematic for the GP estimator we analyzed in the manners described above. 
While the problem of having to choose informative scales has been addressed in the design of our estimator itself, the causes of the other problems were much subtler and call for a detailed analysis.
Before we perform our analysis, we formally describe the corresponding measures.

\subsection{Test Data \label{sec-BasicAnalysisData}}
Each of our test data sets is generated for the purpose of studying the effects on our estimator of one of the two following problems:
\begin{enumerate}
 \item Insensitivity to distribution of pointwise dimensions.
 \item Sensitivity to locally bi-Lipschitz injections.
\end{enumerate}
We present the data sets which test each of these issues in turn.

\subsubsection{dimension blindness \label{sec-BasicAnalysisDimensionBlindness}}
Our first class of measures is derived from the Cantor measure $\xi$ (which we described in detail in Section \ref{sec-PointwiseDimension}).
We call these measures {\em Cantor mixtures}, and they are essentially convex combinations of products of the Cantor measure with itself. 
The Cantor mixtures allow us to test to the sensitivity of our estimator to the distribution of pointwise dimensions of a generating measure.
The goal is to see how susceptible our estimator is to the problems caused by dimension blindness in the case of GP estimators -- the short answer is, ``not very''.
\\

We now describe this class explicitly: 

Choose positive integers $M$, $ N_{1}, N_{2}, \hdots, N_{M}$, and $N$. Choose vectors $v^{(1)}, v^{(2)}, \ldots, v^{(M)} \in \mathbb{R}^N$.
For $v \in \mathbb{R}^{N}$ and $1 \leq k \leq N$, let us write 
\[
 \xi_{v,k} := \left( \prod_{j = 1}^{k} \delta_{v_{j}} * \xi \right) \times \left( \prod_{j=k+1}^N \delta_{v_j} \right).
\]
%such that 
%\[
% \| v^{(i)} - v^{(j)} \|_{\infty} \ge 1, \  \text{for all} \ 1 \le i < j \le M.
%\]
Finally, let $\pi \in \mathbb{R}^{M}$ be a probability vector.
We call the measure 
\begin{equation}
 \label{eq-CantorMixture}
 \sum_{m = 1}^{M} \pi_{m}\xi_{v^{(m)}, N_m}
\end{equation}
a {\em Cantor mixture}.

Basically, for each $1 \le m \le M$, we are taking the $N_{m}$-fold product of the Cantor set with itself, embedding it into the first $N_{m}$ coordinates of $\mathbb{R}^{N}$, and translating it by the vector $v^{(m)}$.
The measure $\xi_{v^{(m)}, N_m}$ can be constructed on this set in the same way that we constructed the Cantor measure $\xi$.
Finally, in order to get the Cantor mixture, we sample from $\xi_{v^{(m)}, N_m}$ with probability $\pi_{m}$.
\\

In our analysis, we always choose 
\[
 v^{(m)} = (m, m, \hdots, m) \in \mathbb{R}^{N}.
\]
We generate five data sets:
\begin{enumerate}
 \item {\bf Data set A} consists of 10,000 points sampled from the Cantor measure up to 30 ternary digits of precision.
 \item {\bf Data set B} consists of 10,000 points sampled from the mixture of the Cantor measure and the five-fold product of the Cantor measure with weights $2/3$ and $1/3$ respectively.
 \item {\bf Data set C} consists of 2000 points sampled from the mixture of the two-fold and three-fold products of the Cantor measure with weights $3/4$ and $1/4$ respectively.
 \item {\bf Data set D} consists of 10,000 points sampled from the mixture of the Cantor measure with its five-fold and ten-fold products with weights $1/7$, $2/7$, and $4/7$ respectively.
 \item {\bf Data set E} consists of 10,000 points sampled from the mixture of the Cantor measure with its two-fold, three-fold, and four-fold products, each with equal weight.
\end{enumerate}
The results of the analysis are presented in Section \ref{sec-BasicAnalysisResults}, and are summarized in Table \ref{tab-CantorMixtureSummary}.

\subsubsection{Transformation Invariance \label{sec-BasicAnalysisTransformation}}

To test the degree of invariance of our estimator to locally bi-Lipschitz injections, we use a class of measures which generalize the measures $\zeta$ that we used in the analysis of the GP estimator.
Rather than defining the measures here, we explicitly describe the corresponding sampling procedures.
\\

We begin with some number $k$ of linear functions $f_{1}( x ), f_{2}( x ), \hdots, f_{k}( x )$ which satisfies the condition
\[
 f_{j}( 1 ) \le f_{j+1}( 0 ), 1 \le j < k.
\]

To generate a sample, we begin by first drawing a sample $X$ from the Cantor measure.
With probability $\pi_{k}$, our sample from the testing measure is $f_{k}( X )$.
\\

We begin with a probability vector $\pi = ( \pi_{1}, \pi_{2}, \hdots, \pi_{k} )$. We define, for $1 \le j \le k$, 
\[
 \sigma_{j} := \sum_{i=1}^{j} \pi_{i}
\]
and define $\sigma_{0} := 0$.
We consider the linear functions $f_{j}( x )$ of slope $2^{j-1}$ satisfying 
\[
 f_{j}( \sigma_{j} ) = f_{j+1}( \sigma_{j} ), 1 \le j < k.
\]
We draw a point $X$ from the Cantor measure.
With probability 1, there is unique $1 \le j \le k$ such that $X \in [ \sigma_{j-1}, \sigma_{j} ]$.
The corresponding sample from the testing measure, then, is $f_{j}( X )$.
\\

In our analysis, $k$ ranges from 1 to 5 and, for each $k$, we generate a hundred data sets, with each data set consisting of 10,000 samples.
In each case, we choose the probability vector $\pi$ by drawing at random from the space of $k$-dimensional probability vectors according to the Dirichlet distribution with the parameter $(1/2, 1/2, \hdots, 1/2 )$.
We choose linear functions $f_{j}( x )$ of slope $2^{j-1}$.
The results of our analysis are presented in the following section, and are summarized in Figure \ref{fig-TransformationInvariance}.

\subsection{Results \label{sec-BasicAnalysisResults}}

In each of the following analyses, we set the number of {\em Euclidean} nearest neighbors $n=1$, and the initial number of clusters $M=1$.
The parameters for the prior distributions ((\ref{eq-PriorDensity}), (\ref{eq-PriorDimension}), and (\ref{eq-PriorWeight})) are fixed to be $\alpha_{\delta} = \alpha_{d} = \gamma_{1}= 1$, and $\beta_{\delta} = \beta_{d} = 10^{-4}$ respectively. 
These are the only choices that we had to make for our estimator and, since our algorithm adaptively chooses the number of clusters $M$, our initialization $M=1$ is not really a serious limitation.
\\

We begin by presenting the results of our analysis on the Cantor mixture data sets which we described at the end of Section \ref{sec-BasicAnalysisDimensionBlindness}.
Table \ref{tab-CantorMixtureSummary} contains a summary which compares estimates for the average pointwise dimension of generating measures using our estimator on the given data sets with their true values.
We also show in the table the number of clusters chosen by our estimator for each data set, and the reader will see that this number of clusters agrees with the number of measures in each mixture.
In fact, the cluster assignments themselves are in line with the decompositions of each of the Cantor mixtures into combinations of the products of the Cantor measure.
Figure \ref{fig-ClusterResults} demonstrates that this is true in the cases of data sets D and E.
In the figure, the points in the data sets are listed along the vertical axis, and the clusters are marked along the horizontal axis.
The intensity of the color corresponding to a particular data point - cluster pair indicates the estimated probability that the given data point belong to the given cluster, with 0 being white and 1 being black.
Of course, our algorithm estimates the average pointwise dimension in each cluster separately. 
The results for each individual cluster in data sets D and E are presented in Table \ref{tab-CantorMixture}.
\\

The results involving the Cantor mixtures that we present exhibit the following general trends, which we have verified in more extensive experiments: 
\begin{enumerate}
 \item The errors in our estimates increase with the number of components  in the mixtures.
 \item The errors in our estimates for $n$-fold products of the Cantor measure increase with $n$.
 \item The errors in our estimates decrease as the sample size increases.
 \item The more components of similar dimension that there are in a mixture, the larger we expect the errors in our estimates to be.
\end{enumerate}
Of these four trends, the first three are not at all surprising.
The fourth calls for elaboration.
The clustering data for data set E provided in Table \ref{tab-CantorMixture} and Figure \ref{fig-ClusterResults} shows that the estimator mistakes points sampled from the two-fold and four-fold products of the Cantor measure as having been sampled from the three-fold product. 
Note that there is considerable physical separation in the data sampled from the various components of the mixture. 
Rather, this estimation error is a result of there being much more similarity between these components in terms of the nearest neighbor distances.
\\

The problem of reducing the interface between clusters as far as the nearest neighbor data is concerned does not seem to be out of reach.
There are many paths that one could follow to attack this problem -- the choice of metric used to calculate the nearest neighbor distances surely has a significant effect, for example, or one might be able to simply analyze the results of our estimator on each individual cluster proposed for the data set as a whole in order to get an idea of how reliable the original cluster assignments are.
Regardless of the method, any improvement along these lines seems to be highly dependent on the particular data in question.
We feel that this is an important consideration in data analysis, and we hope that our estimator provides a useful tool for its solution.
\\

\begin{table*}
 \begin{tabular}{ | c | c | c | c | c | c | c | c | }
 \hline
 \multicolumn{ 6 }{|c|}{ Data settings } & \multicolumn{ 2 }{ | c | }{ Estimation } \\
 \hline
 Dataset & Sample size & Clusters $M$ & $N_{1}, \ldots, N_{M}$ & Frequencies $\pi$ & Avg. Pointwise Dim. & Avg. Pointwise Dim. & Clusters $M$ \\
 \hline
 \hline
 A & 10,000 & 1 & 1 & $(1)$ & 0.63093 & 0.63063 & 1 \\
 \hline
  B & 10,000 & 2 & 1, 5 & $(2/3, 1/3)$ & 1.4722 & 1.4474 & 2 \\
 \hline
  C & 2,000 & 2 & 2, 3 & $(3/4, 1/4)$ & 1.4196 & 1.4449 & 2 \\
 \hline
  D & 10,000 & 3 & 1, 5, 10 & $(1/7, 2/7, 4/7)$ & 4.5968 & 4.6875 & 3 \\
 \hline
  E & 10,000 & 4 & 1, 2, 3, 4 & $(1/4, 1/4, 1/4, 1/4)$ & 1.5773 & 1.4460 & 4 \\
 \hline
 \end{tabular}
 \caption{\label{tab-CantorMixtureSummary} The summary of the analysis of the five Cantor mixture datasets}
\end{table*}

%% ADDDD ------------- SHOHEI 
%% Table of Data set E --  
%% True dimension and estimated dimension for each cluster
%%

\begin{table}
 \begin{tabular}{ | c | c | c | c | }
 \hline
 Dataset D & \multicolumn{ 3 }{|c|}{ Cluster Index } \\
 \hline 
 &  1 & 2 & 3  \\
  \hline \hline 
 Estimated Weight &  0.1403  &  0.2798  &  0.5799\\
 \hline
 Estimated Dimension &  0.6293  &  3.2120  &  6.3811 \\
 \hline
 S.D. of Dimension & 0.0024  &  0.0090 &   0.0166 \\ 
 \hline \hline
 True Weight &  0.1429  &  0.2857  &  0.5714 \\
 \hline
 True Dimension &  0.6309  &  3.1546  &  6.3093 \\
 \hline
 \end{tabular} 
 
 \begin{tabular}{ | c | c | c | c | c | }
 \hline
  Dataset E & \multicolumn{ 4 }{|c|}{ Cluster Index } \\
 \hline 
 &  1 & 2 & 3 & 4 \\
  \hline \hline 
 Estimated Weight & 0.2572  &  0.2069 &   0.3299  &  0.2059 \\
 \hline
 Estimated Dimension &  0.6187  &  1.3876  &  1.4180  &  2.5825 \\
 \hline
 S.D. of Dimension & 0.0017  &  0.0037  &  0.0047  &  0.0084 \\ 
 \hline \hline
 True Weight &  0.25 &   0.25  &  0.25  &  0.25 \\
 \hline
 True Dimension &  0.6309 &   1.2619  &  1.8928  &  2.5237 \\
 \hline
 \end{tabular}
 \caption{\label{tab-CantorMixture} The details of the estimated parameters for Data set D and Data set E. The clusters are indexed in ascending order of its estimated average dimension within the cluster.}
\end{table}

 \begin{figure}[tb, clip]
   \begin{center}
    \includegraphics[width=1\linewidth, clip]{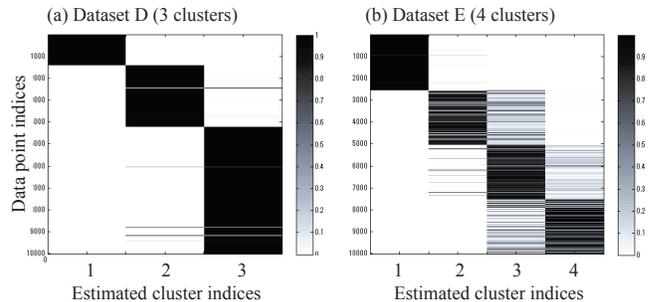}
    \caption{ The cluster assignment visualized by the gray-scale color map. The black cell indicates probability 1, while one indicate probability 0, and gray indicates in-between. For each case, the 10,000 data points are indexed in an ascending order of its true dimension. The clusters are indexed corresponding to Table \ref{tab-CantorMixture}.
    \label{fig-ClusterResults}}      
    \end{center}
  \end{figure}

In our analysis of the sensitivity of our estimator to locally bi-Lipschitz injections, we used the data sets described at the end of Section \ref{sec-BasicAnalysisTransformation}.
The results are summarized in Figure \ref{fig-TransformationInvariance}.
Each of the generating measures has uniform pointwise dimension $\log( 2 )/ \log( 3 ) \approx 0.6309$.
This is indicated in the figure by the red line.
The horizontal axis in the figure represents the number $k$ of regions of varying density (these are the images of the various functions $f_{j}$). The vertical axis represents the estimated average pointwise dimension. The vertical bars around our estimates reflect the standard error of the estimated average pointwise dimensions.
Unsurprisingly, the estimates get worse as the number of regions increases. 
However, even in the case $k=5$, the estimate obtained from our estimator is not far from the true value $\log( 2 )/ \log( 3 )$.
We plan to expand upon this issue of transformation invariance in a subsequent paper.
\\

%% This statement is verified after an additional analysis
%However, the degree by which the local densities vary seems to contribute  quite a bit to this error.

%% Add the table of the number of clusters
%% And discussion

 \begin{figure}[htb, clip]
   \begin{center}
    \includegraphics[width=1\linewidth, clip]{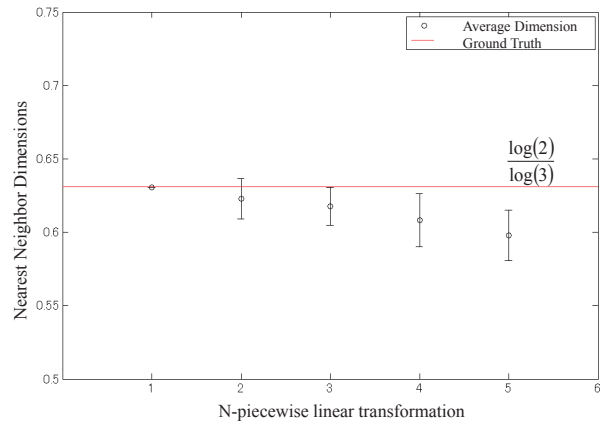}
    \caption{ Transformation invariance \label{fig-TransformationInvariance}}      
    \end{center}
  \end{figure}

With this, we conclude our discussion of the estimator itself.
In the next section we describe some potential uses of our estimator as well as some questions which we feel are important and which arise naturally from our discussion until this point.

%\section{Concluding Remarks/ Discussion \label{sec-Future}}
%\section{Future Scope and Challenges \label{sec-Future}}
\section{Open Questions \label{sec-Future}}

\subsection{Overview \label{sec-FutureOverview}}

We have presented in this paper a new technique for the numerical estimation of fractal dimensions.
As such, the majority of the potential for our method lies in applying it to numerical data.
However, in the development of our estimator, several fundamental questions about dimension estimation itself were raised, which share an intimate relationship with these potential applications.
These questions concern the applicability of our method and dimension estimation methods in general.
We will discuss these methods before moving on to a discussion of the potential numerical scope of our algorithm.
There are also several technical improvements that can potentially be made to the estimator that we have presented.
Suggestions for such improvements have been scattered throughout the text.
We omit such questions in this section, preferring to focus on what we consider to be the major challenges that our estimator makes accessible.
\\

This section is really more about what we do not know than what we know.
We do attempt, however, to provide preliminary numerical results with each category of open questions that we discuss.
Although we have presented our questions in different categories, there is considerable interaction between them.
This will be apparent from the number of cross references as far as our numerical analyses are concerned.

\subsection{Fundamental Challenges in Dimension Estimation \label{sec-FutureFundamentalChallenges}}

One of the main sources of motivation in the development of our estimator was dimension blindness of correlation dimension. 
We observed, however, that our GP estimator was not truly blind to the distributions of pointwise dimensions.
This suggests that such an estimator does not truly reflect the correlation dimension of the generating measure of a given data set. 
The first fundamental question concerns the relationship between pointwise dimension and correlation dimension:
\begin{question}
\label{q-PointwiseCorrelation}
 Given an estimator of pointwise dimension, is it possible to derive an estimator of correlation dimension?
\end{question}

As we observed in our discussion of the phenomenon of dimension blindness of correlation dimension, the correlation dimension of a finite mixture of measures of uniform pointwise dimension is the minimal pointwise dimension of these measures.
This suggests an estimator of correlation dimension built upon our proposed estimator of pointwise dimension which simply returns the minimum average dimension amongst the clusters in our approximation (\ref{eq-AlgorithmOutput}).
\\

Things are not so simple in general, for it is possible that the lowest-dimensional cluster in our estimate is simply an artifact of noise in our data. Even if this is not the case, it is impossible to tell whether the generating measure is indeed a {\em finite} mixture such as the ones we have discussed. 
It is not clear that our observation holds in the same manner for more general classes of measures.
We will begin to address the problem of noisy data in the following sections.
The other problem, however, seems to require a more serious mathematical analysis.
A generic measure, to the extent that such an object exists, would not exhibit the same regularity as a finite mixture of measures of exact pointwise dimension.
\\

It is not clear how to attack the problem in general, for it requires one to relate the distribution of pointwise dimensions to the correlation dimension. 
Our knowledge of this relationship is still in infancy -- as far as we are aware, the state of the art is the work of Young \cite{Young1982} and Pesin \cite{Pesin1993}.
\\

Additionally, the estimation method we proposed above for finite mixtures of measures of exact dimension suggests another interesting question -- does the minimal cluster dimension when using the $n^{\text{th}}$-nearest neighbor distances provide an estimate in fact for the $(n+1)$-generalized dimension of the mixture?
This relationship is suggested by (\ref{eq-GeneralizedDimension}) and (\ref{eq-PointwiseDimensionLimit}).
\\

The next big development in our derivation of the estimator was the limit-free description of pointwise dimension which we utilized.
We framed this description as holding for measures which satisfy the local uniformity condition (\ref{eq-LocalUniformity}). Thus, our algorithm as we present it only {\em guaranteed}  to return a meaningful estimate if the measure whose pointwise dimension distribution we are trying to estimate is exactly of the form (\ref{eq-AlgorithmOutput}).
It is, however, possible to show that the estimates produced by our algorithm are valid even for a more general class of measures -- we plan to discuss this in a future paper.
This discussion, however, leads to our second fundamental question:
\begin{question}
\label{q-Applicability}
 What is the most general class of measures for which one can give a limit-free description of pointwise dimension?
\end{question}

Such a limit-free description would not necessarily imply an effective estimator for the class in question.
It would, however, probably lead to a more generally applicable estimator than ours.
\\

There seems to be the measures for which such a description does not seem possible.
For example, those of Billingsley \cite{Billingsley1960}, which Cutler \cite{Cutler1993} terms ``fractal measures''.
For these measures, there is a dense set of points of measure 0 on which the pointwise dimension seems to vary wildly from the constant almost everywhere pointwise dimension of the measure. As far as our estimator is concerned, since we use the Poisson process formalism and the data can only be generated up to a finite scale, these points seems to be given undue weight. In fact, our estimator seems to  estimate the Hausdorff dimension of the support of such a measure.
This raises our third fundamental question, which is really more of a challenge:
\begin{question}
\label{q-Billingsley}
 Can one devise an estimator of pointwise dimension which approximates with reasonable accuracy the uniform pointwise dimensions of Billingsley's measures?
\end{question}

\subsection{Dimension and Dynamical Systems \label{sec-FutureDynamicalEstimation}}

%\subsection{Numerical Dimension Estimation for Dynamical Systems \label{sec-FutureNumericalEstimation}}

The problems we discuss in this section are natural extensions of our discussion of the utility of Hausdorff dimension in the field of dynamical systems from Section \ref{sec-HausdorffDimension}.
We pose some of these questions here as questions about the numerical estimation of pointwise dimension distribution for the H\'enon attractors.
It is important to note, though, that these questions about H\'enon attractor have a much more general scope.
\\

Before we begin estimating, it is important to ask ourselves whether the data upon which we are running our estimator satisfies our hypotheses.
We have already discussed in the previous section problems involving generating measures which do not satisfy our uniformity condition.
In discussion dimension estimation for dynamical data, we are confronted with even more serious issue -- there need not even be a measure that one could consider as having generated the data in question. As far as we are aware, previous estimates of fractal dimension for attractors in dynamical systems have been obtained under the assumption that the generating system is ergodic.
In this case, the data can be assumed to have been generated by an ergodic measure for the given system.
There has, to our knowledge, been no systematic way of testing this hypothesis.
\\

The results of Cutler \cite{Cutler1990,Cutler1992}, of which we proved a simplified version as Proposition \ref{prop-ErgodicityExactDimension},
indicate the possibility of testing the ergodicity of certain classes of dynamical systems.
For example, as per Proposition \ref{prop-ErgodicityExactDimension}, a locally bi-Lipschitz injection $f$ cannot be ergodic if a reliable estimate of the pointwise dimension distribution of data from a generic forward orbit of $f$ indicates that the pointwise dimension is not constant.
The crucial point here is that the estimate must be {\em reliable}.
This shifts some of the difficulty of testing ergodicity to the more practical problem of establishing the reliability of an estimator.
\begin{question}
 \label{q-Ergodicity}
  Can one design an estimator of pointwise dimension which is reliable enough to falsify the hypothesis of ergodicity?
\end{question}

In the rest of this section, we use our estimator to analyze data generated from the H\'enon map with the goal of testing its ergodicity.
We choose the H\'enon map because it is a popular system to analyze in the field of dimension estimation.
It is not clear to us that our estimator is reliable enough for this purpose, and this is something we urge the reader to keep in mind during the analysis.
\\

For $a, b \in \mathbb{R}$, we define 
\begin{equation}
 \label{eq-HenonMap}
 F_{a,b}( x, y ) := ( y + 1 - a x^{2}, b x ).
\end{equation}
The maps $F_{ a, b }$ are called the {\em H\'enon maps} and were first studied by H\'enon \cite{Henon1976}.
The map $F := F_{ 1.4, 0.3 }$ is the {\em classical H\'enon map} (Figure \ref{fig-HenonClassical}).
 \begin{figure}[htb, clip]
   \begin{center}
    \includegraphics[width=1\linewidth, clip]{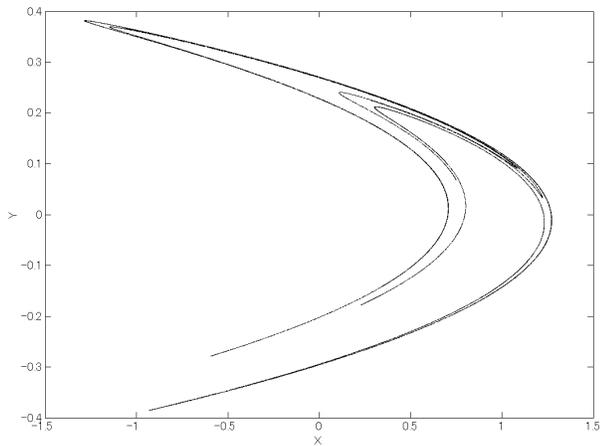}
    \caption{Attractor of the classical H\'enon map.
    \label{fig-HenonClassical} }      
    \end{center}
  \end{figure}
\\

The dynamics of this classical H\'enon map $F$ are of particular interest as the forward orbits this map converges to what is known as a {\em strange attractor}, which indicates that the dynamics are chaotic. 
We refer to the classical H\'enon map simply as the H\'enon map, and its attractor as the H\'enon attractor.
\\

The strangeness of the H\'enon attractor can be established, under the assumption of ergodicity, by estimating certain fractional dimensions associated with it.
For example, Russel, Hanson, and Ott \cite{Russell1980} estimated the box-counting dimension of the attractor to be $1.261 \pm 0.003$. Grassberger and Procaccia \cite{Grassberger1983} estimated the correlation dimension of this attractor to be $1.25 \pm 0.02$.
%% Figure of the plain Henon attractor
\\

When $b \neq 0$, the map $F_{ a, b }$ is clearly locally bi-Lipschitz -- its Jacobian at every point has determinant $b$. Thus, at least on these grounds, there can be no objection to using our estimator to test the ergodicity of the H\'enon map.
The entire weight of our conclusions rests upon the reliability of our estimator.
\\

To approximate the H\'enon attractor, we choose an initial point $( x_{0}, y_{0} )$ in the plane by sampling from a bivariate normal distribution and generate its forward orbit under the H\'enon map. As the attractor is only observable in the long-term behavior of the orbit, we discard the first 1000 iterates. We use the subsequent 30,000 iterates as our data.
In our analysis, we use three sets of data generated in this manner.
\\

First we present the estimate obtained by setting the number of Euclidean nearest neighbors $n$ to 1 and initializing our estimator with one cluster.
Our estimator found two clusters in each data set, and Figure \ref{fig-HenonNN} shows the pointwise dimension distribution for each cluster in Data set 4.
%This yields an estimate of $1.1430$ for the average pointwise dimension of this data. %% ver1.4
This yields an estimate of $1.0918$ for the average pointwise dimension of this data. % ver1.4.1
%The maximum  $1.2327$. % $1.1284$ minimum % ver 1.4
The estimates for the individual clusters in the individual data sets seems to be much more revealing.
This information is presented in Table \ref{tab-HenonDimension}.
From these estimates, applying the principle of Proposition \ref{prop-ErgodicityExactDimension}, it seems highly unlikely that the H\'enon map is ergodic.
\\

  \begin{figure}[htb, clip]
   \begin{center}
    \includegraphics[width=1\linewidth, clip]{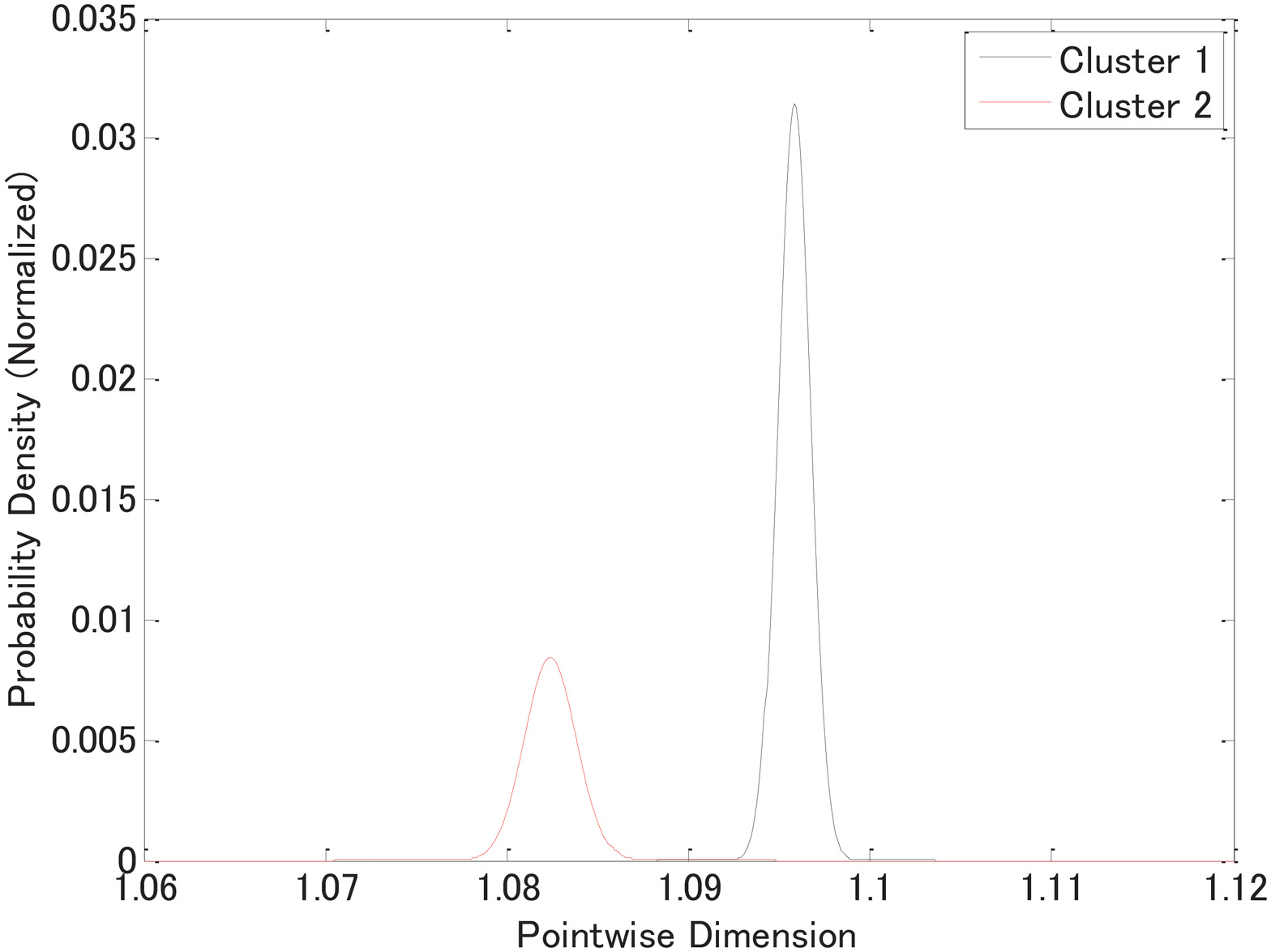}
    \includegraphics[width=1\linewidth, clip]{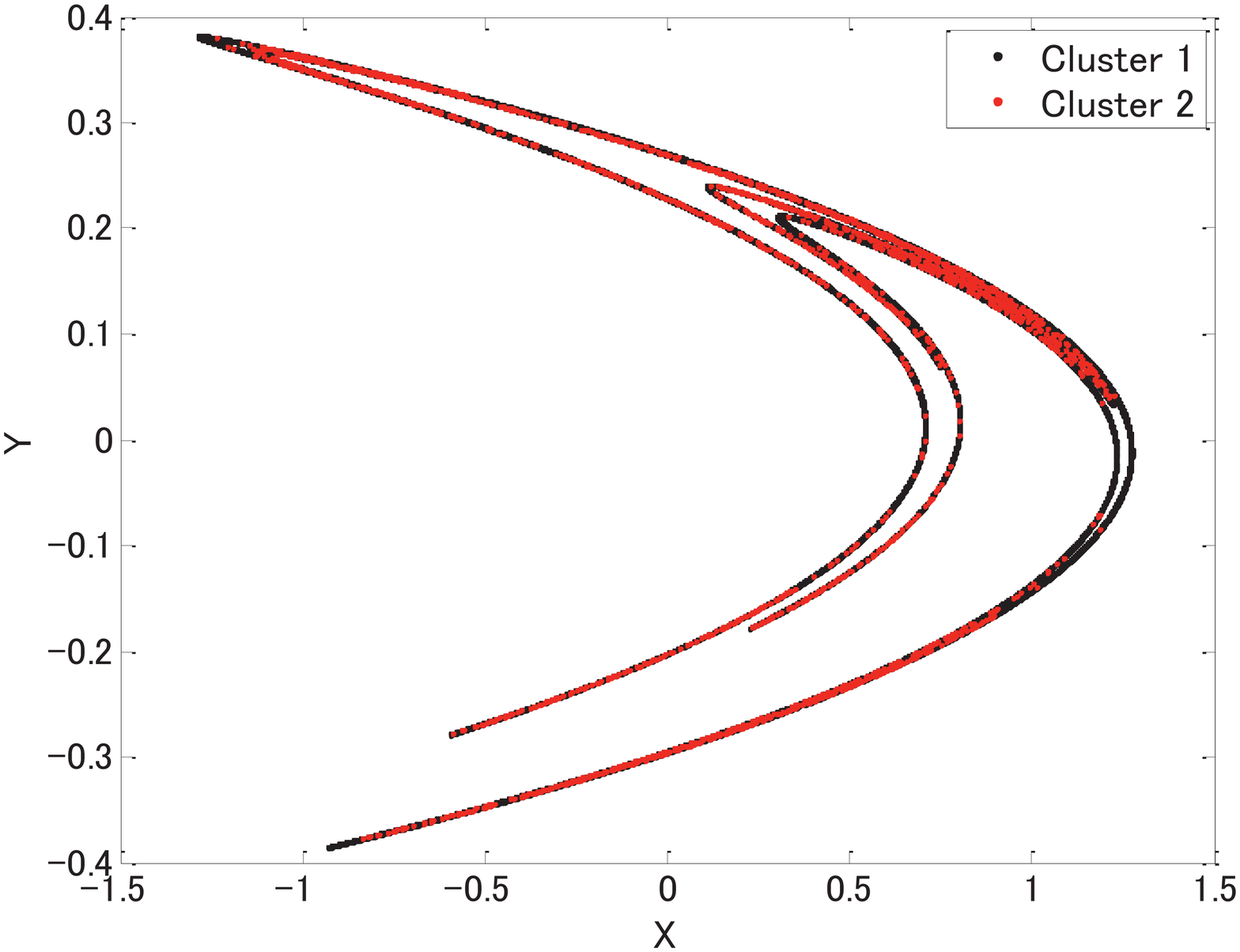}
    \caption{Top: Estimated pointwise dimension distribution of each cluster in a forward orbit of the H\'enon map. Bottom: The cluster assignments corresponding to the estimates indicated in the top figure.
    \label{fig-HenonNN} }      
    \end{center}
  \end{figure}

%% Table
\begin{table}
 \begin{tabular}{| c | c | c | c | c |}
  \hline
            & {Data set 1} & {Data set 2} & {Data set 3} & {Data set 4} \\
  \hline \hline
  {Cluster 1} & 1.2327     &  1.1755    &    1.1571  & 1.0958 \\
  \hline
  {Cluster 2} & 1.1284     &  1.1137   &    1.1138 & 1.0824 \\
  \hline
 \end{tabular}
 \caption{\label{tab-HenonDimension} The average pointwise dimension of each cluster in each data set.}
\end{table}

It is worth noting that neither of the two estimated clusters is negligible in any of the data sets. This is indicated in Table \ref{tab-HenonWeight} which contains the proportions of data points in each data set belonging to each of the clusters.
It is also interesting that the average pointwise dimension of Cluster 2 seems to be fairly uniform across the data sets.
The reason for this is unclear to us.
\\

\begin{table}
 \begin{tabular}{| c | c | c | c | c |}
  \hline
            & {Data set 1} & {Data set 2} & {Data set 3} & {Data set 4} \\
  \hline \hline
  {Cluster 1} & 0.3236     &  0.1837   &  0.4191 & 0.6971\\   
  \hline
  {Cluster 2} & 0.6764    &  0.8163   &  0.5809 & 0.3029 \\   
  \hline
 \end{tabular}
  \caption{\label{tab-HenonWeight} The proportion of points in each data set belonging to each of the two clusters.}
\end{table}

It is possible that the variation between the data sets is the results of the randomness in our choice of initial points rather than reflection of chaotic behavior of the H\'enon map.
We are not aware of any serious analysis of the length of time it takes for iterates of the H\'enon map to converge to its attractor given an initial value, but this is certainly a pertinent question given the results of our analysis -- an estimate of this time to convergence given a particular initial value would make our estimator more reliable for the purpose of falsifying ergodicity of the H\'enon map.
One could perhaps conduct systematic numerical experiments with varying initial values, using our estimator as a tool to study this question.
\\

In the absence of any guarantee of a generating measure related to the H\'enon map for our data, it is very difficult to distinguish between essential dynamical properties of the system and artifacts related to these initial value issues.
We can, however, offer some evidence that the estimates produced by our estimator reflect the essential dynamics of the H\'enon map -- our estimates remain relatively stable as we take larger- and larger-dimensional time-delay embeddings of the data sets. 
This is indicated in Figure \ref{fig-HenonEmbedding}.
Such stability is commonly seen as evidence for low-dimensional dynamical behavior.
One would perhaps expect a more marked change in behavior as the embedding dimension was increased if the initial value effect is significant. 
The reasoning here is very speculative should not be taken seriously.
There are many potential contributing factors to the stability of our estimates under time delay embedding. For example, there could be a significant contributions from effects arising from the embedding maps themselves. 
We do not know how to rule these out. The matter calls for more expertise than we currently have.
\\

 \begin{figure}[htb, clip]
   \begin{center}
    \includegraphics[width=1\linewidth, clip]{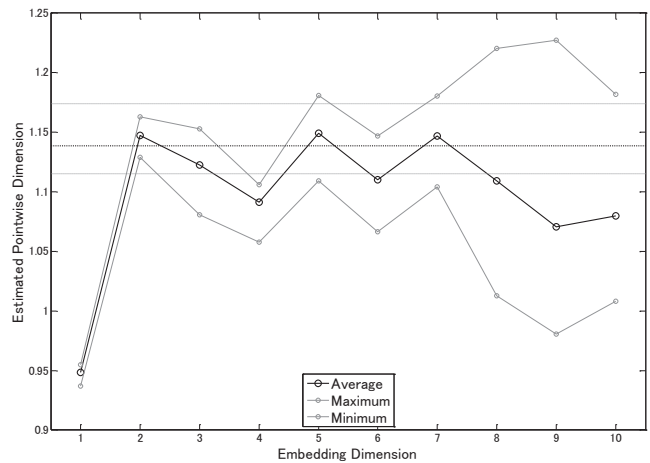}
    \caption{Estimated average pointwise dimension of the H\'enon attractor as function of embedding dimension. The broken lines indicate the estimated pointwise dimension of the H\'enon attractor on the original plane.
    \label{fig-HenonEmbedding} }      
    \end{center}
  \end{figure}
  
The stability that our estimates show under an increase in the embedding dimension is not present as we increase the number $n$ of nearest neighbors we use to produce our estimates.
This instability is indicated in Figure \ref{fig-HenonNNIncrease}.
We are not sure why this is so, but  it is perhaps related to the question of relationship between $n$ and the $(n+1)$-generalized dimension which arose in the discussion following Question \ref{q-PointwiseCorrelation}.
\\

   \begin{figure}[htb, clip]
   \begin{center}
    \includegraphics[width=1\linewidth, clip]{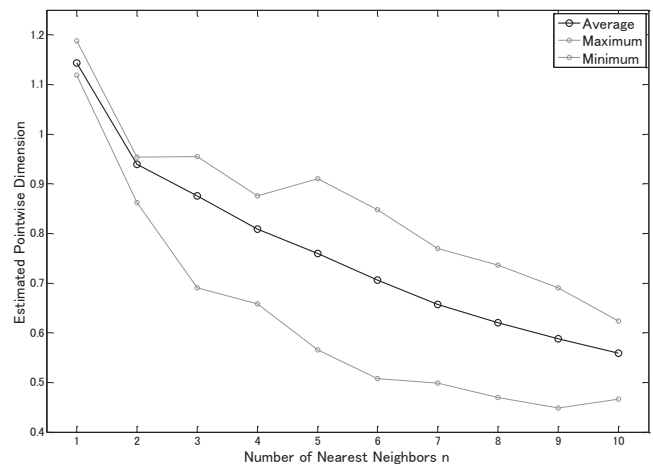}
    \caption{Estimated average pointwise dimension of the H\'enon attractor as a function of the number of nearest neighbors $n$.
    \label{fig-HenonNNIncrease} }      
    \end{center}
  \end{figure}

Finally, although it is not known whether or not the H\'enon map is ergodic, 
Benedicks and Young \cite{Young1993} have proved that there is a set of parameters $(a, b)$ of positive Lebesgue measure for which there  exists a neighborhood of the attractor for the map $F_{a, b}$ in which, for almost every point $x$ with respect to the Lebesgue measure, the time averages of continuous functions $\phi$ on the neighborhood converge to 
\[
 \frac{ 1 }{ n } \sum_{ j = 1 }^{ n } \phi( F_{ a, b }^{ j }( x ) ) \rightarrow \int \phi \D \lambda^{*},
\]
for some measure $\lambda^{*}$.
This measure $\lambda^{*}$ is called a Sinai-Bowen-Ruelle (SBR) measure for $F_{a,b}$.
\\

We are not aware of the current state of knowledge about this SBR measure $\lambda^{*}$.
If the contribution of initial value effects we have discussed above is relatively small, however, it is likely that the pointwise dimension distributions for our data sets reflect the pointwise dimension distribution of $\lambda^{*}$.
If this is the case, our estimator could prove to be a valuable tool in the study of not only $\lambda^{*}$ but SBR measures for general dynamical systems.
Before such an application becomes feasible, it is important to study the following question:
\begin{question}
 \label{q-SBR}
  What properties do the pointwise dimension distributions of Sinai-Bowen-Ruelle measures exhibit?
\end{question}

\subsection{Detection of Stochastic Effects \label{sec-FutureStochastic}}

There are many considerations which motivate the discussion in this section but, broadly speaking, this section is about noise.
We will use as illustrations two classes of data sets.

The first class consists of data sampled from the Cantor measure but with some added Gaussian noise.
Let $X$ be a random variable distributed according to the Cantor measure and let $Y_{\sigma}$ be a normal random variable with mean $0$ and variance $\sigma^{2}$. 
We construct the data set $\mathcal{C}_{\sigma}$ by sampling independently 10,000 points from $X + Y_{\sigma}$. Note that this is equivalent to sampling from the convolution of the Cantor measure with the corresponding normal measure. 

The second class consists of a single data set $\mathcal{B}$, which is generated by a Wiener process $W(t)$. $\mathcal{B}$ consists of the data $W(1), W(2), \hdots, W(K)$.
%% K = ???
\\

%Before we proceed, we must discuss specifically how we sample the data $\mathcal{C}_{\sigma}$.
%The reason we do this is because it is related to how we chose the {\em noise scales} $\sigma$ in our analysis.

We fix a positive integer $n$, the number of nearest neighbors we will use in our estimator.
We denote by $R$ the expectation of square of the distance from a point sampled from the Cantor measure and its $n^{\text{th}}$-nearest neighbor given that we are sampling a total of 10,000 such points.
For given value of $\sigma$, we write 
\[
 \lambda_{\sigma} := \frac{ 2 \sigma^{2} }{ R }.
\] 
We call $\lambda_{\sigma}$ the {\em noise level} corresponding to $\sigma$.
It is the expected (additive) surplus when we divide the square of the expected $n^{\text{th}}$-nearest neighbor distance of the noisy data with that of the noise-free data.
\\

In our analysis, we used data sets $\mathcal{C}_{\sigma}$ with noise levels 
\[
 \lambda_{\sigma} = 10^{-6 + \frac{k}{4}} ,
\]
for $0 \le k \le 56$.
In Figure \ref{fig-NoiseAnalysisDetail}, we present the results of this analysis.
We fixed the parameter $n=1$.
The open figures indicate the number of clusters detected at the corresponding noise level by their shape, as indicated in the legend.
The filled figures correspond to us forcing upon our estimator a fixed number of clusters.

 \begin{figure}[htb, clip]
   \begin{center}
    \includegraphics[width=1\linewidth, clip]{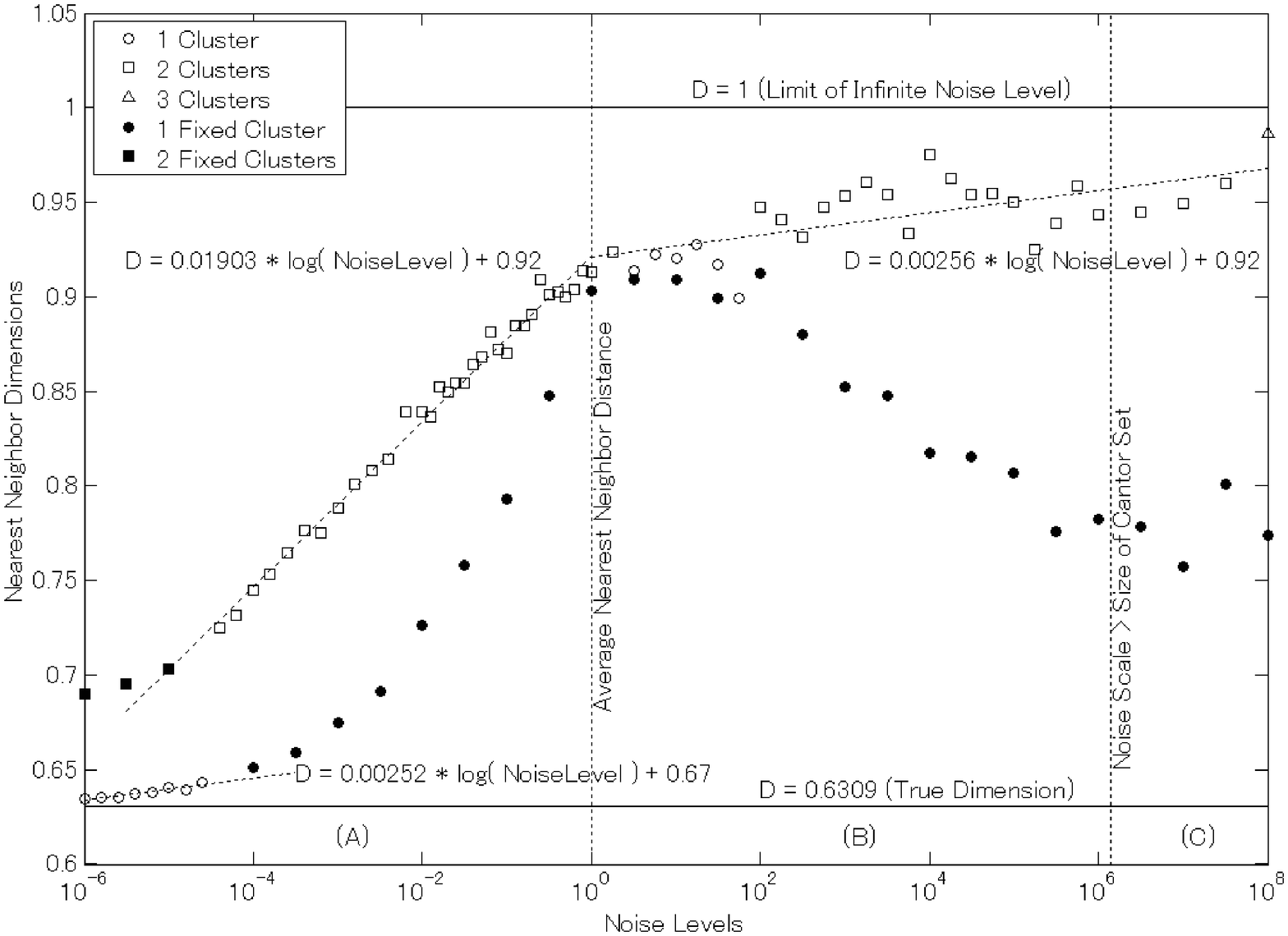}
    \caption{Estimated average pointwise dimension of noisy Cantor measures.
    \label{fig-NoiseAnalysisDetail} }      
    \end{center}
  \end{figure}

There are three distinct intervals of the noise levels: 
\begin{itemize}
\item[(A)] The interval where the noise level is smaller than the average nearest neighbor distance. Here, the structure of the original Cantor set remains substantially unchanged.
\item[(B)] The interval where the noise level is higher than the average nearest neighbor distances but still overlaps with the support of the Cantor measure. Here, the noise begins to dominate the Cantor measure in terms of dimension.
\item[(C)] The noise level is larger than the length of the support of the Cantor measure and the structure of the Cantor measure is almost invisible in the dimension estimates.
\end{itemize}
%In the interval A with lower noise level, the "true" dimension is expected of the Cantor set. In the interval B and C with higher noise level, the true dimension is one which is of the dimension the normally distributed noise underlies. 
Note that the local densities of the data points are not uniform at each of the noise levels in intervals (B) and (C). This is due to the relatively large scale of the noise at those levels as compared to the scale of the data generated from the Cantor measure.
It is interesting to note, however, that our estimator detected only one cluster at noise levels between 1 and 100.
\\
%the different probability densities of the normal noise in the interval (B) and (C) except for the interval with a sufficiently low noise level 
%(i.e., noise level $\lambda < 10^2$ in which the algorithm discovered a single cluster).

With the Brownian motion data $\mathcal{B}$, we analyzed the effects of increasing the time-delay embedding dimension.
The results of our analysis are presented in Figure \ref{fig-FractionalBrownianMotion}.
The independence properties of Browian motion dictate the growth (\cite{Cutler1993}; Theorem 4.3.2) of pointwise dimension observed in our estimates.

 \begin{figure}[htb, clip]
   \begin{center}
    \includegraphics[width=1\linewidth, clip]{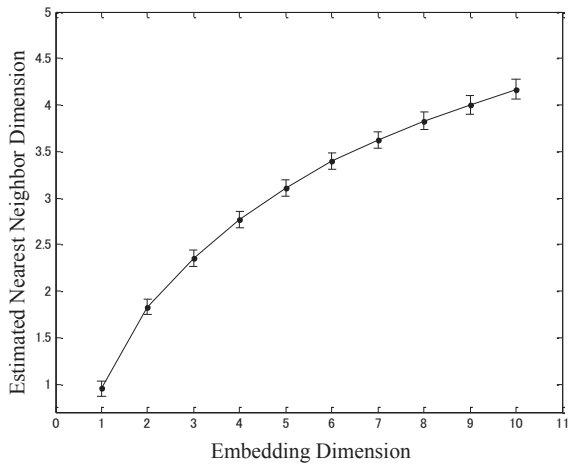}
    \caption{The average nearest neighbor dimensions estimated on a set of time series of fractional Brownian motion as a function of embedding dimension.
    \label{fig-FractionalBrownianMotion} }      
    \end{center}
  \end{figure}
  
This growth does not manifest itself in similar analyses with GP estimators  except under very limited circumstances.
This problem with GP estimators was discovered numerically by Osborne \cite{Osborne1989} and Rapp \cite{Rapp1993}.
Theiler \cite{Theiler1991} contains a theoretical discussion of this phenomenon.
Schreiber \cite{Schreiber1993} has analyzed the effects of normally distributed additive noise when data is embedded in unnecessarily high dimensions.
\\

As a word of warning, Cutler \cite{Cutler1993} in Remark 4.3.4 provides an example of a stochastic process which does not exhibit the same type of growth of dimension as we observed in the case of Brownian motion. The notion that one can distinguish between dynamic and stochastic behavior from such an analysis using embedding dimensions is only a rule of thumb.
\\

Still, our analysis of the noisy Cantor set data as well as the Brownian motion data suggests the following question:

\begin{question}
 \label{q-NoiseDetection}
  Can one devise a systematic method for noise detection given an effective estimator of pointwise dimension?
\end{question}

Furthermore, it may also be possible to use estimates of pointwise dimension to reduce noise in data.
Most existing methods of noise reduction, at least as far as time series data is concerned, involve the calculation of local coordinates for the time series at each data point.
See, for example, the papers of Sauer \cite{Sauer1992}, Kantz \cite{Kantz1993}, and Grassberger \cite{Grassberger1993}.
There seems to be some connection between pointwise dimension and the number of such parameters required at a given point.
This is essentially our final question:

\begin{question}
 \label{q-NoiseReduction}
 What is the relationship between pointwise dimension of measure at a given point and the number of parameters required to express the data near that point?
\end{question}

\bibliography{NNDimReferences}

\end{document}